\documentclass[a4paper,UKenglish]{lipics-v2021}

\title{Greybox Learning of Languages Recognizable by Event-Recording Automata}
\author{Anirban Majumdar}{Université Libre de Bruxelles, Belgium}{anirban.majumdar@ulb.be}{https://orcid.org/0000-0003-4793-1892}{funded by a grant from Fondation ULB}
\author{Sayan Mukherjee}{Université Libre de Bruxelles, Belgium}{sayan.mukherjee@ulb.be}{https://orcid.org/0000-0001-6473-3172}{funded by a grant from Fondation ULB}
\author{Jean-François Raskin}{Université Libre de Bruxelles, Belgium}{jean-francois.raskin@ulb.be}{https://orcid.org/0000-0002-3673-1097}{receives support from the Fondation ULB}

\authorrunning{Anirban Majumdar, Sayan Mukherjee, and Jean-François Raskin}
\ccsdesc{Theory of computation~Timed and hybrid models}
\keywords{Automata learning, Greybox learning, Event-recording automata}
\supplement{The source codes related to our algorithm can be found as follows:}
\supplementdetails[subcategory={Source Code}]{Software}{https://github.com/mukherjee-sayan/ERA-greybox-learn}
\supplementdetails[subcategory={}]{ATVA 2024 Artifact Evaluation approved artifact}{https://doi.org/10.5281/zenodo.12684215}

% macros =========
\usepackage[normalem]{ulem}
\usepackage{amsmath, amsfonts, amssymb}
\usepackage{thmtools}
\usepackage{graphicx}
\usepackage{multirow}
\usepackage{xcolor}
\usepackage{todonotes}
\usepackage{subcaption}
\usepackage{pifont}
\usepackage{wrapfig,lipsum,booktabs}

\usepackage{tikz,pgf}
\usetikzlibrary{arrows,automata,shapes,calc,positioning,intersections,backgrounds}

\newcommand{\sem}[1]{[ \! [ #1 ] \! ]}

\newcommand{\word}{\mathsf w}

\newcommand{\lrw}[1]{L^{\mathsf{rw}}(#1)}
\newcommand{\lsrw}[1]{L^{\mathsf {s \cdot rw}}(#1)}
\newcommand{\ltw}[1]{L^{\mathsf{tw}}(#1)}

% mathsf
\newcommand{\tw}{\mathsf{tw}}
\newcommand{\cw}{\mathsf{cw}}
\newcommand{\rw}{\mathsf{rw}}

\newcommand{\sw}{\mathsf{sw}}
\newcommand{\regL}{{\sf RegL}(\Sigma,K)}

\newcommand{\tlsep}{\textsf{tLSep}}

\newcommand{\sevents}{\Sigma \times {\sf Reg}(\Sigma,K)}

\newcommand{\SC}{\mathsf{C}} % denotes K-constraints
\newcommand{\RW}[1]{\mathsf{RW}_K(#1)}

% color text

\newcommand{\ie}{\emph{i.e.}}

% ticks and crosses
\newcommand{\tick}{\ding{51}}
\newcommand{\cross}{\ding{55}}
% ====================

\nolinenumbers

\begin{document}
\hideLIPIcs
\maketitle

\begin{abstract}
In this paper, we revisit the active learning of timed languages recognizable by event-recording automata. Our framework employs a method known as greybox learning, which enables the learning of event-recording automata with a minimal number of control states. This approach avoids learning the region automaton associated with the language, contrasting with existing methods. We have implemented our greybox learning algorithm with various heuristics to maintain low computational complexity. The efficacy of our approach is demonstrated through several examples.
\end{abstract}

\section{Introduction}
\label{sec:intro}

Formal modeling of complex systems is essential in fields such as requirement engineering and computer-aided verification. However, the task of writing formal models is both laborious and susceptible to errors. This process can be facilitated by learning algorithms~\cite{ACM-Vaandrager}. For example, if a system can be represented as a finite state automaton, and thus if the system's set of behaviors constitutes a regular language, there exist thoroughly understood passive and active learning algorithms capable of learning the minimal DFA of the underlying language\footnote{Minimal DFAs are a canonical representation of regular languages and are closely tied to the Myhill-Nerode equivalence relation}. We call this problem the specification discovery problem in the sequel. In the realm of active learning, the $L^*$ algorithm~\cite{Angluin87}, pioneered by Angluin, is a celebrated approach that employs a protocol involving membership and equivalence queries, and enabling an efficient learning process for regular languages.

In this paper, we introduce an active learning framework designed to acquire timed languages that can be recognized by {\em event-recording automata}
(ERA, for short)\cite{AFH99}.
The class of ERA enjoys several nice properties over the more general class of Timed Automata (TA)~\cite{AD94}. First, ERA are determinizable and hence complementable, unlike TA, making it particularly well suited for verification as inclusion checking is decidable for this class. Second, in ERA, clocks are explicitly associated to events and implicitly reset whenever its associated event occurs. This property leads to automata that are usually much easier to interpret than classical TA where clocks are not tied to events. This makes the class of ERA a very interesting candidate for learning, when interpretability is of concern.

Adapting the $L^*$ algorithm for ERA is feasible by employing a canonical form known as the {\em simple deterministic event-recording automata} (SDERA). These SDERAs, introduced in~\cite{GJL10} specifically for this application, are essentially the underlying region automaton of the timed language. Although the region automaton fulfills all the necessary criteria for the $L^*$ algorithm, its size is generally excessive. Typically, the region automaton is significantly larger than a deterministic ERA (DERA, for short) with the minimum number of locations for a given timed language (and it is known that it can be \emph{exponentially} larger). This discrepancy arises because the region automaton explicitly incorporates the semantics of clocks in its states, whereas in a DERA, the semantics of clocks are implicitly utilized. The difficulty to work directly with DERA is related to the fact that there is no natural canonical form for DERA and usually, there are several DERA with minimum number of states that recognize the same timed language. To overcome this difficulty, we propose in this paper an innovative application of the {\em greybox learning} framework, see for example~\cite{DBLP:conf/nfm/0002R16}, and demonstrate that it is possible to learn an automaton whose number of states matches that of a minimum DERA for the underlying timed language.

In our approach, rather than learning the region automaton of the timed language, we establish an active learning setup in which the semantics of the regions are already {\em known} to the learning algorithm, thus eliminating the need for the algorithm to learn the semantics of regions.
 Our greybox learning algorithm is underpinned by the following concept:
\begin{equation}
\label{eq:intersection}
\RW{L}=L(C) \cap {\sf RegL}(\Sigma,K)
\end{equation}
\noindent
Here, $\RW{L}$ is a set of \emph{consistent region words} that exactly represents the target timed language $L$.
Meanwhile, ${\sf RegL}(\Sigma,K)$ denotes the regular language of consistent region words across the alphabet $\Sigma$ and a maximal constant $K$. This language is precisely defined once $\Sigma$ and $K$ are fixed and remains independent of the timed language to be learned; therefore, it does not require learning. Essentially, deciding if a region word belongs to $\regL$ or not reduces to solving the following “consistency problem”: given a  region word, are there concrete timed words \emph{consistent} with this (symbolic) region word? The answer to this question does \emph{not} depend on the timed language to be learned but \emph{only} on the semantics of clocks. 
This reduces the consistency problem to the problem of solving a linear program, which is known to be solvable efficiently.

The target of our greybox learning algorithm is a DERA $C$ recognizing the timed language $L$. 
We will demonstrate that if a DERA with $n$ states can accept the timed language to be learned, then 
the DERA $C$ returned by our algorithm has at most $n$ many states. 
Here is an informal summary of our learning protocol: within the context of the $L^*$ vocabulary, we assume that the {\tt Teacher} is capable of responding to membership queries of \emph{consistent}
region words. Additionally, the {\tt Teacher} handles \emph{inclusion} queries, unlike equivalence queries in $L^*$ by determining whether a timed language is contained in another.
If a discrepancy exists, {\tt Teacher} provides a counter-example in the form of a region word. 

This learning framework can be applied to specification discovery problem presented in the motivations above: we assume that the requirements engineer has a specific timed language in mind for which he wants to construct a DERA, but we assume that he is not able to formally write this DERA on his own. However, the requirement engineer is capable of answering membership and inclusion queries related to this language. Since the engineer cannot directly draft a DERA that represents the timed language, it is crucial that the queries posed are {\em semantic}, focusing on the language itself rather than on a specific automaton representing the language. This distinction ensures that the questions are appropriate and manageable given the assumptions about the engineer's capabilities. This is in contrast with~\cite{LADSL11}, where queries are syntactic and answered with a specific DERA known by the {\tt Teacher}. We will give more details about this comparison later in Section~\ref{sec:learning-algo}.

\subparagraph*{Technical contribution.} First, we define a greybox learning algorithm to derive a DERA with the {\em minimal number} of states that recognizes a target timed language. To achieve the learning of a minimal automaton, we must solve an NP-hard optimization problem: calculating a minimal finite state automaton that separates regular languages of positive and negative examples derived from membership and equivalence queries~\cite{DBLP:journals/iandc/Gold78}. For that, we adapt to the timed setting the approach of Chen et al. in~\cite{CFCTW09}.
A similar framework has also been studied using SMT-based techniques by Moeller et al. in~\cite{MWSKF023}. Second, we explore methods to circumvent this NP-hard optimization problem when the requirement for minimality is loosened. We introduce a heuristic that computes a candidate automaton in polynomial time using a greedy algorithm. This algorithm strives to maintain a small automaton size but does not guarantee the discovery of the absolute minimal solution, yet 
guaranteed to terminate with a DERA that accepts the input timed language. We have implemented this heuristic and demonstrate with our prototype that, in practice, it yields compact automata.

\subparagraph*{Related works.}
Grinchtein et al. in~\cite{GJL10} also present an active learning framework for ERA-recognizable languages. However, their learning algorithm generates simple DERAs -- these are a canonical form of ERA recognizable languages, closely aligned with the underlying region automaton. In a simple DERA, transitions are labeled with region constraints, and notably, every path leading to an accepting state is satisfiable. This means the sequence of regions is consistent (as in the region automaton), and any two path prefixes reaching the same state are region equivalent. Typically, this results in a prohibitively large number of states in the automaton that maintains explicitly the semantics of regions. In contrast, our approach, while also labeling automaton edges with region constraints, permits the merging of prefixes that are not region equivalent. Enabled by equation \ref{eq:intersection}, this allows for a more compact representation of the ERA languages (we generate DERA with minimal number of states). While our algorithm has been implemented, there is, to the best of our knowledge, no implementation of Grinchtein's algorithm, certainly due to  high complexity of their approach.

Lin et al. in~\cite{LADSL11} propose another active learning style algorithm for inferring ERA, but with zones (coarser constraints than regions) as guards on the transitions. This allows them to infer smaller automata than~\cite{GJL10}.
However, it assumes that {\tt Teacher} is familiar with a {\em specific} DERA for the underlying language and is capable of answering syntactic queries about this DERA. 
In fact, their learning algorithm is recognizing the regular language of words over the alphabet, composed of events and zones of the specific DERA known to {\tt Teacher}. Unlike this approach, our learning algorithm does not presuppose a known and fixed DERA but only relies on queries that are semantic rather than syntactic. Although the work of~\cite{LADSL11} may seem superficially similar to ours, we dedicate in Section~\ref{sec:learning-algo} an entire paragraph to the formal and thorough comparison between our two different approaches.

Recently, Masaki Waga developed an algorithm~\cite{Waga23} for learning the class of Deterministic Timed Automata (DTA), which includes ERA recognizable languages. Waga's method resets a clock on every transition and utilizes region-based guards, often producing complex automata. Consequently, this algorithm does not guarantee the minimality of the resulting automaton, but merely ensures it accepts the specified timed language. In contrast, our approach guarantees learning of automata with the minimal number of states, featuring implicit clock resets that enhance readability. 
In the experimental section, we detail a performance comparison of our tool against their tool, {\sf LearnTA}, across various examples. Further, in~\cite{DTA-hscc}, the authors extend Waga's approach to reduce the number of clocks in learned automata, though they also do not achieve minimality. Their method also requires a significantly higher number of queries to implement these improvements. 

Several other works have proposed learning methods for different models of timed languages, such as, One-clock Timed Automata~\cite{octa}, Real-time Automata~\cite{AWZZZ21}, Mealy-machines with timers~\cite{VB021,BGPSV24}. There have also been approaches other than active learning for learning such models, for example, using Genetic programming~\cite{TALL19} and passive learning~\cite{VWW09,CLRT22}.

\section{Preliminaries and notations}
\label{sec:prelims}
\subparagraph*{DFA and 3DFA.}
Let $\Sigma=\{\sigma_1,\sigma_2,\dots,\sigma_k\}$ be a finite {\em alphabet}. A deterministic finite state automaton (DFA) over $\Sigma$ is a tuple $A=(Q,q_{\sf init},\Sigma,\delta,F)$ where $Q$ is a finite set of states, $q_{\sf init} \in Q$ is the initial state, $\delta : Q \times \Sigma \rightarrow Q$ is a partial transition function, and $F \subseteq Q$ is the set of final (accepting) states. When $\delta$ is a total function, we say that $A$ is \emph{total}. A DFA $A=(Q,q_{\sf init},\Sigma,\delta,F)$ defines a regular language, noted $L(A)$, which is a subset of $\Sigma^*$ and which contains the set of words $w=\sigma_1 \sigma_2 \dots \sigma_n$ such that there exists an {\em accepting run} of $A$ over~$w$, \emph{i.e.}, there exists a sequence of states $q_0 q_1 \dots q_{n}$ such that $q_0=q_{\sf init}$, and for all $i$, $1 \leq i \leq n$, $\delta(q_{i-1},\sigma_{i})=q_{i}$, and $q_{n} \in F$. We then say that the word $w$ is \emph{accepted} by $A$. If such an accepting run does not exist, we say that $A$ \emph{rejects} the word $w$.

A 3DFA $D$ over $\Sigma$ is a tuple $(Q,q_{\sf init},\Sigma,\delta,{\sf A},{\sf R},{\sf E})$ where $Q$ is a finite set of states, $q_{\sf init} \in Q$ is the initial state, $\delta : Q \times \Sigma \rightarrow Q$ is a total transition function, 
and ${\sf A} \uplus {\sf R} \uplus {\sf E}$ forms a partition of $Q$ into a set of accepting states ${\sf A}$, a set of rejection states ${\sf R}$, and a set of `don't care' states~${\sf E}$. The 3DFA $D$ defines the function $D : \Sigma^* \rightarrow \{0,1,?\}$ as follows: for $w=\sigma_0 \sigma_1 \dots \sigma_n$ let $q_0 q_1 \dots q_{n+1}$ be the unique run of $D$ on $w$.  If 
$q_{n+1} \in {\sf A}$ (resp.~$q_{n+1} \in {\sf R}$), then $D(w)=1$ (resp.~$D(w)=0$) and we say that $w$ is \emph{strongly accepted} (resp.~\emph{strongly rejected}) by $D$, and  if $q_{n+1} \in {\sf E}$, then $D(w)={?}$. 
We interpret ``$?$'' as {\em either way}, \emph{i.e.}, so the word can be either accepted or rejected. We say that a regular language $L$ is consistent with a 3DFA $D$, noted
$L \models D$, if for all $w \in \Sigma^*$, if $D(w)=0$ then $w \not\in L$, and if $D(w)=1$ then $w \in L$. 
Accordingly, we define $D^-$ as the DFA obtained from $D$ where the partition ${\sf A}\uplus{\sf R}\uplus{\sf E}$ is replaced by $F={\sf R}$, clearly $L(D^-)=\{ w \in \Sigma^* \mid D(w)=0 \}$, it is thus the set of words that are strongly rejected by $D$. Similarly, we define $D^+$ as the DFA obtained from $D$ where the partition ${\sf A}\uplus{\sf R}\uplus{\sf E}$ is replaced by $F={\sf A}$, then $L(D^+)=\{ w \in \Sigma^* \mid D(w)=1 \}$, it is thus the set of words that are strongly accepted by $D$. Now, it is easy to see that $L \models D$ iff $L(D^+) \subseteq L$ and $L(D^-)\subseteq \overline{L}$. 

\subparagraph*{Timed words and timed languages.}
A \emph{timed word} over an alphabet $\Sigma$ is a finite sequence $(\sigma_1,t_1)(\sigma_2,t_2)\dots (\sigma_n,t_n)$ where each $\sigma_i \in \Sigma$ and $t_i \in \mathbb{R}_{\geq 0}$, for all $1 \leq i \leq n$, and for all $1 \leq i < j \leq n$, $t_i \leq t_j$ (time is monotonically increasing). We use $\mathsf{TW}(\Sigma)$ to denote the set of all timed words over the alphabet $\Sigma$.

A \emph{timed language} is a (possibly infinite) set of timed words.
In existing literature, a timed language is called \emph{timed regular} when there exists a timed automaton `recognizing' the language. Timed Automata (TA)~\cite{AD94} extend deterministic finite-state automata with \emph{clocks}. In what follows, we will 
use a subclass of TA, where clocks have a pre-assigned semantics and are not reset arbitrarily. This class is known as the class of Event-recording Automata (ERA)~\cite{AFH99}.
We now introduce the necessary vocabulary and notations for their definition.

\subparagraph*{Constraints.} 
A \emph{clock} is a non-negative real valued variable, that is, a variable ranging over $\mathbb{R}_{\geq 0}$.
Let $K$ be a positive integer.
An \emph{atomic $K$-constraint} over a clock $x$, is an expression of the form $x = c$, $x \in (c,d)$ or $x > K$ where $c,d \in \mathbb{N} \cap [0,K]$.
A \emph{$K$-constraint} over a set of clocks $X$ is a conjunction of atomic $K$-constraints over clocks in $X$.
Note that, $K$-constraints are closely related to the notion of \emph{zones}~\cite{DBLP:conf/avmfss/Dill89} as in the literature of TA. 
However, zones also allow difference between two clocks, that are not allowed in $K$-constraints.

An \emph{elementary $K$-constraint} over a clock $x$, is an atomic $K$-constraint where the intervals are restricted to unit intervals; more formally, it is an expression of the form $x = c$, $x \in (d,d+1)$ or $x > K$ where $c,d,d+1 \in \mathbb{N} \cap [0,K]$. 
A \emph{simple $K$-constraint} over $X$ is a conjunction of elementary $K$-constraints over clocks in $X$, 
where each variable $x \in X$ appears exactly in one conjunct. 
The definition of simple constraints also appear in earlier works, for example~\cite{GJL10}.
One can again note that, simple constraints closely relate to the classical notion of \emph{regions}~\cite{AD94}.
However, again, regions consider difference between two clocks as well, which simple constraints do not. Interestingly, we do not need the granularity that regions offer (by comparing clocks) for our purpose, because a sequence of simple constraints induce a unique region in the classical sense (see Lemma~\ref{lem:simple-region}).

\subparagraph*{Satisfaction of constraints.}
A \emph{valuation} for a set of clocks $X$ is a function ${v \colon X \rightarrow \mathbb{R}_{\geq 0}}$.
A valuation $v$ for $X$ satisfies an atomic $K$-constraint $\psi$, written as $v \models \psi$, if: $v(x) = c$ when $\psi := x = c$, $v(x) \in (c,d)$ when $\psi := x \in (c,d)$ and $v(x) > K$ when $\psi := x > K$.

A valuation $v$ satisfies a $K$-constraint over $X$ if $v$ satisfies all its conjuncts.
Given a $K$-constraint $\psi$, we denote by $\sem{\psi}$ the set of all valuations $v$ such that $v \models \psi$. We say that a simple constraint $r$ satisfies an elementary constraint $\psi$, noted $r \models \psi$, if $\sem{r} \subseteq \sem{\psi}$. It is easy to verify that for any simple constraint $r$ and any elementary constraint $\psi$, either $\sem{r} \cap \sem{\psi}=\emptyset$ or $\sem{r} \subseteq \sem{\psi}$.

Let $\Sigma=\{\sigma_1,\sigma_2,\dots,\sigma_k\}$ be a finite \emph{alphabet}. The set of \emph{event-recording clocks} associated to $\Sigma$ is denoted by $X_{\Sigma}=\{ x_{\sigma} \mid \sigma \in \Sigma\}$. We denote by $\SC(\Sigma, K)$ the set of all $K$-constraints over the set of clocks $X_{\Sigma}$ and we use $\mathsf{Reg}(\Sigma,K)$ to denote the set of all simple $K$-constraints over the clocks in $X_{\Sigma}$. For simple $K$-constraints we adopt the notation $\mathsf{Reg}$, since, as remarked earlier, these kinds of constraints relate closely to the well-known notion of regions.
Since simple $K$-constraints are also $K$-constraints, we have that $\mathsf{Reg}(\Sigma,K) \subset \SC(\Sigma,K)$.

\begin{definition}[ERA]
\label{def:dera}
    A $K$-Event-Recording Automaton ($K$-ERA) $A$ is a tuple $(Q,q_{\mathsf{init}},\Sigma,E,F)$ where $Q$ is a finite set of states, $q_{\sf init} \in Q$ is the initial state, $\Sigma$ is a finite alphabet, $E \subseteq Q \times \Sigma \times \SC(\Sigma,K) \times Q$ is the set of transitions, and $F \subseteq Q$ is the subset of accepting states. Each transition in $A$ is a tuple $(q, \sigma, g, q')$, where $q, q' \in Q$, $\sigma \in \Sigma$ and $g \in \SC(\Sigma, K)$, $g$ is called the \emph{guard} of the transition.
$A$ is called a $K$-\emph{deterministic}-ERA ($K$-DERA) if for every state $q \in Q$ and every letter $\sigma$, if there exist two transitions $(q, \sigma, g_1, q_1)$ and $(q, \sigma, g_2, q_2)$ then $\sem{g_1} \cap \sem{g_2} = \emptyset$.
\end{definition}

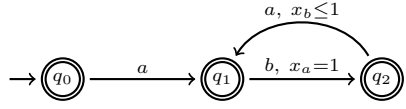
\begin{figure}[t]
    \centering
    \begin{tikzpicture}[scale=0.7]
    \everymath{\scriptstyle}
        \begin{scope}[every node/.style={circle, draw, inner sep=2pt,
            minimum size = 3mm, outer sep=3pt, thick}] 
            \node [double] (0) at (0,0) {$q_0$}; 
            \node [double] (1) at (3,0) {$q_1$};
            \node [double] (2) at (6,0) {$q_2$};
        \end{scope}
        \begin{scope}[->, thick]
            \draw (-1,0) to (0); 
            \draw (0) to (1);
            \draw (1) to (2);
            \draw [bend right = 60] (2) to (1);
        \end{scope}
        \node at (1.5,0.2) {$a$};
        \node at (4.5,0.25) {$b,~x_a = 1$};
        \node at (4.5,1.3) {$a,~x_b \leq 1$};
    \end{tikzpicture}
    \caption{A DERA that accepts timed words where every $a$ is followed by a $b$ after exactly $1$ time unit and every $b$ is followed by an $a$ within $1$ time units}
    \label{fig:dera}
\end{figure}

For the semantics, initially, all the clocks start with the value $0$ and then they all elapse at the same rate. For every transition on a letter $\sigma$, once the transition is taken, its corresponding recording clock $x_\sigma$ gets reset to the value $0$.

\subparagraph*{Clocked words.} \label{paragraph:clocked-words}
Given a timed word ${\sf tw}=(\sigma_1,t_1)(\sigma_2,t_2)\dots (\sigma_n,t_n)$, we associate with it a \emph{clocked word} 
\label{timed-to-clocked-word}
$\mathsf{cw}(\mathsf{tw})=(\sigma_1,v_1)(\sigma_2,v_2)\dots (\sigma_n,v_n)$ where each $v_i : X_{\Sigma} \rightarrow \mathbb{R}_{\geq 0}$ maps each clock of $X_{\Sigma}$ to a real-value as follows: $v_i(x_\sigma)=t_i-t_j$ where $j=\max\{ l < i \mid \sigma_l=\sigma \}$, with the convention that $\max(\emptyset)=0$. 
In words, the value $v_i(x_\sigma)$ is the amount of time elapsed since the last occurrence of $\sigma$ in {\sf tw}; 
which is why we call the clocks $x_\sigma$ `recording' clocks. Although not explicitly, clocks are implicitly reset after every occurrence of their associated events.

\subparagraph*{Timed language of a $K$-ERA.}
A timed word $\mathsf{tw}=(\sigma_1,t_1)(\sigma_2,t_2)\dots (\sigma_n,t_n)$ with its clocked word ${\sf cw}({\sf tw})=(\sigma_1,v_1)(\sigma_2,v_2)\dots (\sigma_n,v_n)$, is \emph{accepted} by $A$ if there exists a sequence of states $q_0 q_1 \dots q_n$ of $A$ such that $q_0=q_{\sf init}$, $q_n \in F$, and for all $1 \leq i \leq n$, there exists $e=(q_{i-1},\sigma,\psi,q_i) \in E$ such that $\sigma_i=\sigma$, and $v_i \models \psi$. 
The set of all timed words accepted by $A$ is called the \emph{timed language} of $A$, and will be denoted by $\ltw{A}$. 
We now ask a curious reader: is it possible to construct an ERA with two (or less) states that accepts the timed language represented in Figure~\ref{fig:dera}? We will answer this question in Section~\ref{section:implementation}.

\begin{lemma}[\cite{AFH99}]
\label{lem:era-equals-dera}
    For every $K$-ERA $A$, there exists a $K$-DERA $A'$ such that $\ltw{A} = \ltw{A'}$ and there exists a $K$-DERA $\overline{A'}$ such that $\overline{\ltw{A}} = \ltw{\overline{A'}}$.
\end{lemma}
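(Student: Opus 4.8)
The plan is to prove the two parts in sequence: first determinization (existence of $A'$), and then complementation (existence of $\overline{A'}$), since complementation will rely on having a deterministic automaton in hand. For the determinization part, I would follow the classical subset-construction idea, but adapted to the event-recording setting where the guards are $K$-constraints over $X_\Sigma$. The crucial observation is that for an ERA, the value of each clock $x_\sigma$ after reading a prefix is \emph{determined} by the timed word itself (it is the time since the last $\sigma$, as made explicit by the clocked-word map $\cw$), and does \emph{not} depend on which nondeterministic transitions were taken. This is the feature that distinguishes ERA from general TA and is exactly what makes determinization possible.

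First I would refine the guards so that the automaton branches on a finite set of mutually exclusive, exhaustive constraints. Concretely, I would replace each guard $g \in \SC(\Sigma,K)$ by the finite set of simple $K$-constraints $r \in \mathsf{Reg}(\Sigma,K)$ with $\sem{r}\subseteq\sem{g}$; by the property noted after the definition of satisfaction, every valuation satisfies exactly one simple $K$-constraint, and either $\sem{r}\cap\sem{g}=\emptyset$ or $\sem{r}\subseteq\sem{g}$, so this refinement preserves the timed language while making the constraints into a partition. Then I would perform the subset construction: the states of $A'$ are sets $S \subseteq Q$, the initial state is $\{q_{\mathsf{init}}\}$, and for each letter $\sigma$ and each simple constraint $r$ there is a transition $(S,\sigma,r,S')$ where $S'=\{\,q' \mid \exists q\in S,\ (q,\sigma,r',q')\in E,\ \sem{r}\subseteq\sem{r'}\,\}$. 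A set $S$ is accepting iff $S \cap F \neq \emptyset$. Determinism is immediate because the $r$'s are pairwise disjoint. The correctness argument, that $\ltw{A'}=\ltw{A}$, proceeds by induction on the length of the timed word, using the key fact that the clock valuation $v_i$ read off from $\cw(\mathsf{tw})$ is forced by $\mathsf{tw}$, so the unique run of $A'$ tracks exactly the set of states reachable in $A$ on the same timed word.

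For the complementation part, I would take the deterministic automaton $A'$, first completing it by adding a sink (non-accepting) state that absorbs every $(\sigma,r)$ pair for which $A'$ has no outgoing transition, so that from every state and every letter exactly one simple constraint is enabled leading somewhere. Because the enabled simple constraints partition the valuation space and the automaton is deterministic and complete, every timed word now has exactly one run. The complement automaton $\overline{A'}$ is then obtained simply by flipping the accepting status of every state, setting $\overline{F}=Q'\setminus F'$. A timed word is accepted by $\overline{A'}$ iff its unique run ends in a non-accepting state of $A'$, i.e.\ iff it is \emph{not} accepted by $A'$, giving $\ltw{\overline{A'}}=\overline{\ltw{A}}$.

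The main obstacle, and the step deserving the most care, is verifying that the clock valuations behave correctly under the subset construction, that is, that the value seen on a transition of $A'$ coincides with the value that each component $q\in S$ would see in $A$. This is where the event-recording discipline is essential: since $x_\sigma$ always records time since the last $\sigma$ regardless of the chosen run, all states in a subset $S$ are reached with the \emph{same} valuation after a given timed-word prefix, so branching on a single simple constraint $r$ is coherent across the whole set. Once this synchronization of valuations is established, the two inductions are routine. I would also note that completion for complementation does not change the timed language of $A'$ and that the added simple constraints genuinely cover all valuations, so no timed word is accidentally lost or gained.
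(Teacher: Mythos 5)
Your proposal is correct and follows essentially the same route as the source the paper relies on: the paper gives no proof of Lemma~\ref{lem:era-equals-dera}, citing it directly from~\cite{AFH99}, and the classical argument there is precisely your subset construction (valid because event-recording clock valuations are determined by the timed word alone, independently of the run taken), followed by completion with a sink state and dualization of the accepting states for the complement. Note also that your preliminary refinement of guards into simple $K$-constraints mirrors the transformation the paper itself performs in Section~\ref{sec:prelims} when converting an ERA into one whose guards are simple constraints, so your plan is fully consistent with the paper's framework.
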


A timed language $L$ is $K$-ERA (resp. $K$-DERA) definable if there exists a $K$-ERA (resp. $K$-DERA) $A$ such that $\ltw{A} = L$. Now, due to Lemma~\ref{lem:era-equals-dera}, a timed language $L$ is $K$-ERA definable iff it is $K$-DERA definable.

\subparagraph*{Symbolic words}
over $(\Sigma,K)$ are finite sequences $(\sigma_1,g_1)(\sigma_2,g_2)\dots (\sigma_n,g_n)$ where each $\sigma_i \in \Sigma$ and $g_i \in \SC(\Sigma,K)$, for all $1 \leq i \leq n$.
Similarly, a \emph{region word} over $(\Sigma,K)$ is a finite sequence $(\sigma_1,r_1)(\sigma_2,r_2)\dots (\sigma_n,r_n)$ where each $\sigma_i \in \Sigma$ and $r_i \in {\sf Reg}(\Sigma,K)$ is a simple $K$-constraint, for all $1 \leq i \leq n$. Lemma~\ref{lem:simple-region} will justify why we refer to symbolic words over simple constraints as `region' words.

We are now equipped to define when a timed word ${\sf tw}$ is compatible with a symbolic word $\sw$. Let  $\tw=(\sigma_1,t_1)(\sigma_2,t_2)\dots (\sigma_n,t_n)$ be a timed word with its clocked word being $(\sigma_1, v_1)(\sigma_2, v_2)\dots (\sigma_n, t_n)$, and let $\sw=(\sigma'_1,g_1)(\sigma'_2,g_2)\dots (\sigma'_m,g_m)$ be a symbolic word. We say that $\tw$ is compatible with $\sw$, noted $\tw \models \sw$, if we have that: $(i)$ $n=m$, \emph{i.e.}, both words have the same length, $(ii)$ $\sigma_i=\sigma'_i$ and $(iii)$ $v_i \models g_i$, for all $i$, $1 \leq i \leq n$. We denote by $\sem{\sw}=\{ \tw \in {\sf TW}(\Sigma) \mid \tw \models \sw\}$. We say that a symbolic word $\sw$ is \emph{consistent} if $\sem{\sw} \neq \emptyset$, and \emph{inconsistent} otherwise. 
We will use $\regL$ to denote the set of all \emph{consistent} region words over the set of all $K$-simple constraints $\mathsf{Reg}(\Sigma, K)$.

\begin{example}
Let $\Sigma = \{a,b\}$ be an alphabet and let $X_{\Sigma} = \{x_a, x_b\}$ be its set of recording clocks.
    The symbolic word $\rw_1 := (a,~x_a=0\wedge x_b = 0)(b,~x_a=1 \wedge x_b=1)$ is consistent, since the timed word $(a,0)(b,1) \models \rw_1$. On the other hand, consider the symbolic word $\rw_2 := (a,~x_a=0\wedge x_b = 1)(b,~x_a=1 \wedge x_b = 0)$. Let $\tw = (a,t_1)(b,t_2)$ be a timed word and $\cw(\tw) = (a,v_1)(b,v_2)$ be the clocked word of $\tw$. Then $\tw \models \rw_2$ would imply $t_1 = 0$ and $t_2 - t_1 = 1$, but in that case, $v_2(x_b) = 1$, that is, $v_2 \not\models x_a=1\wedge x_b = 0$. Hence, $\rw_2$ is inconsistent.
\end{example}

\begin{lemma}
\label{lem:rw-for-tw}
    For every timed word $\tw$ over an alphabet $\Sigma$, for every fixed positive integer $K$, there exists a unique region word $\rw \in \regL$ where $\tw \in \sem{\rw}$.
\end{lemma}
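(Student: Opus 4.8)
The plan is to prove both \emph{existence} and \emph{uniqueness} of a consistent region word $\rw$ with $\tw \in \sem{\rw}$.

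For existence, I would proceed constructively from the clocked word. Given $\tw = (\sigma_1,t_1)\dots(\sigma_n,t_n)$, first compute its clocked word $\cw(\tw) = (\sigma_1,v_1)\dots(\sigma_n,v_n)$ as defined in the paragraph on clocked words. At each position $i$ and for each clock $x_\sigma \in X_\Sigma$, the value $v_i(x_\sigma)$ is a fixed non-negative real. I would then define $r_i$ to be the unique simple $K$-constraint over $X_\Sigma$ whose $x_\sigma$-conjunct is the elementary constraint pinning down $v_i(x_\sigma)$: namely $x_\sigma = c$ if $v_i(x_\sigma) = c \in \mathbb{N}\cap[0,K]$, $x_\sigma \in (d,d+1)$ if $v_i(x_\sigma)$ lies strictly between consecutive integers $d,d+1$ with $d+1 \le K$, and $x_\sigma > K$ if $v_i(x_\sigma) > K$. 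Exactly one of these alternatives applies to each real value, so $r_i$ is well defined and $v_i \models r_i$ by construction. Setting $\rw = (\sigma_1,r_1)\dots(\sigma_n,r_n)$, we get $\tw \models \rw$, so $\tw \in \sem{\rw}$; in particular $\sem{\rw}\neq\emptyset$, hence $\rw \in \regL$ is consistent.

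For uniqueness, suppose $\rw' = (\sigma'_1,r'_1)\dots(\sigma'_m,r'_m)$ is another consistent region word with $\tw \in \sem{\rw'}$. By the definition of compatibility $\tw \models \rw'$ forces $m = n$ and $\sigma'_i = \sigma_i$ for all $i$, so only the simple constraints can differ. Fix a position $i$ and a clock $x_\sigma$. Since $\tw \models \rw'$ gives $v_i \models r'_i$, the value $v_i(x_\sigma)$ satisfies the $x_\sigma$-conjunct of $r'_i$, which is an elementary constraint. But the three elementary forms ($x_\sigma = c$, $x_\sigma \in (d,d+1)$, $x_\sigma > K$) partition $\mathbb{R}_{\geq 0}$ into disjoint cells, so $v_i(x_\sigma)$ determines its elementary constraint uniquely. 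Hence the $x_\sigma$-conjunct of $r'_i$ coincides with that of $r_i$; as this holds for every $\sigma$, and a simple constraint has exactly one conjunct per clock, we conclude $r'_i = r_i$ for all $i$, i.e.\ $\rw' = \rw$.

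The only mild subtlety, and the step I would treat most carefully, is the claim that the elementary constraints partition the range of each clock value, so that every real $v_i(x_\sigma) \geq 0$ falls into exactly one elementary cell: this is what simultaneously guarantees that $r_i$ is \emph{defined} (existence) and \emph{forced} (uniqueness). This is exactly the disjointness/containment property already noted in the excerpt (for a simple constraint $r$ and elementary constraint $\psi$, either $\sem{r}\cap\sem{\psi}=\emptyset$ or $\sem{r}\subseteq\sem{\psi}$), specialised here to a single clock, so no new machinery is needed. Everything else is routine bookkeeping, and the overall argument is essentially that a single timed word has a fully determined clocked word whose coordinatewise ``rounding'' into elementary intervals is canonical.
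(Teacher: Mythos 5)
Your proof is correct and follows essentially the same route as the paper's: construct the region word from the clocked word by noting that each valuation satisfies exactly one simple $K$-constraint (which gives existence via $\tw \in \sem{\rw}$, hence consistency, and uniqueness since the elementary cells partition $\mathbb{R}_{\geq 0}$). The paper gives only a brief sketch of this argument; you have simply spelled out the same idea in full detail.
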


\begin{proof}[Proof sketch]
    Given a timed word $\tw$, consider its clocked word $\cw(\tw)$ (as in Page~\pageref{paragraph:clocked-words}). Now, it is easy to see that each clock valuation satisfies only one simple $K$-constraint. Therefore, we can construct a region word by replacing the valuations present in $\cw(\tw)$ with the simple $K$-constraints they satisfy, thereby constructing the (unique) region word $\rw$ that $\tw$ is compatible with.
\end{proof}

\begin{lemma}
\label{lem:simple-region}
For every $K$-ERA $A$ over an alphabet $\Sigma$ and for every region word $\rw \in \regL$, either $\sem{\rw} \subseteq \ltw{A}$ or $\sem{\rw} \cap \ltw{A}=\emptyset$. Equivalently, for all timed words $\tw_1,\tw_2 \in \sem{\rw}$, $\tw_1 \in \ltw{A}$ iff $\tw_2 \in \ltw{A}$.
\end{lemma}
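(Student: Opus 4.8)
The plan is to prove the equivalent formulation on the right: for any two timed words $\tw_1,\tw_2 \in \sem{\rw}$, I will show $\tw_1 \in \ltw{A}$ iff $\tw_2 \in \ltw{A}$, from which the dichotomy $\sem{\rw} \subseteq \ltw{A}$ or $\sem{\rw} \cap \ltw{A} = \emptyset$ follows immediately. The driving idea is that whether a clock valuation satisfies an ERA guard depends only on the simple constraint (region) it lies in, and never on the concrete timed word, so any two timed words compatible with the same $\rw$ induce exactly the same set of runs in $A$.

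First I would fix $\rw = (\sigma_1,r_1)\cdots(\sigma_n,r_n)$ and write the clocked words as $\cw(\tw_1) = (\sigma_1,v_1)\cdots(\sigma_n,v_n)$ and $\cw(\tw_2) = (\sigma_1,v'_1)\cdots(\sigma_n,v'_n)$; compatibility with $\rw$ gives the same letter at each position together with $v_i, v'_i \in \sem{r_i}$ for all $i$. The central claim to establish is then: for every guard $g \in \SC(\Sigma,K)$ and every index $i$, one has $v_i \models g$ iff $v'_i \models g$.

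To prove this claim I would lift the dichotomy stated earlier (for a simple constraint against an elementary constraint) to arbitrary atomic $K$-constraints. The point is that the elementary constraints partition $\mathbb{R}_{\geq 0}$ into the points $\{0\},\{1\},\dots,\{K\}$, the open unit intervals $(0,1),\dots,(K-1,K)$, and $(K,\infty)$; the semantics of any atomic $K$-constraint on a single clock is a union of these pieces, while the projection of a simple constraint onto that clock is exactly one such piece. Hence for a simple constraint $r$ and an atomic $K$-constraint $\psi$ we again have either $\sem{r} \cap \sem{\psi} = \emptyset$ or $\sem{r} \subseteq \sem{\psi}$. Since a guard $g$ is a conjunction of atomic $K$-constraints, each constraining a single clock, and $\sem{r_i}$ is a product of per-clock pieces, satisfaction of $g$ factors through the individual clocks; so membership in $\sem{g}$ is determined solely by $r_i$, and as both $v_i$ and $v'_i$ lie in $\sem{r_i}$ they must agree. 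This establishes the claim.

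Finally I would transfer the claim to runs. An accepting run of $A$ on a timed word is a state sequence $q_0 \cdots q_n$ with $q_0 = q_{\sf init}$, $q_n \in F$, and for each $i$ a transition $(q_{i-1},\sigma_i,g,q_i) \in E$ with $v_i \models g$. Because $\tw_1$ and $\tw_2$ carry the same letters and, by the claim, agree on the satisfaction of every guard at every position, a state sequence is an accepting run witnessing $\tw_1$ exactly when it is one witnessing $\tw_2$. Hence $\tw_1 \in \ltw{A}$ iff $\tw_2 \in \ltw{A}$, which finishes the argument; note this works even when $A$ is nondeterministic, since we only reason about the existence of an accepting run. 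The only genuine work is the dichotomy lift to atomic (and thence conjunctive) $K$-constraints, so I expect the guard-uniformity claim to be the main, though essentially routine, obstacle, with the reduction to runs being immediate.
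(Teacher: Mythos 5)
Your proof is correct. It is worth noting, however, that the paper does not actually give an argument for this lemma: it merely remarks that the statement ``can be proved using Lemma~18 of~\cite{GJL10}'' and delegates all details to that reference. Your proposal is therefore a genuinely self-contained alternative. Its core step --- lifting the dichotomy between a simple constraint and an \emph{elementary} constraint to arbitrary \emph{atomic} $K$-constraints --- is sound: the semantics of an atomic $K$-constraint on a single clock is a union of the elementary pieces $\{0\},\dots,\{K\}$, $(0,1),\dots,(K-1,K)$, $(K,\infty)$, whereas a simple constraint pins every clock to exactly one such piece; hence for any guard $g \in \SC(\Sigma,K)$, satisfaction of $g$ by a valuation in $\sem{r_i}$ is an invariant of $r_i$ alone, and the run-transfer step then goes through verbatim, since the two timed words admit exactly the same accepting state sequences. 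What your route buys is (i) independence from the external reference, and (ii) an explicit confirmation that the argument needs no determinism --- a relevant point, since the lemma is stated for arbitrary $K$-ERA while the paper's surrounding prose speaks of ``distinguishing with a DERA.'' What the paper's route buys is brevity and reuse of a result already established in~\cite{GJL10} for essentially the same notion of region-refined guards. One cosmetic caution: your argument, like the paper's definition of $K$-constraints, assumes guards are conjunctions of atomic constraints; guards such as $x_b \leq 1$ appearing in the paper's figures should first be read as unions of elementary pieces, for which your dichotomy argument still applies unchanged.
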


The above lemma states that, two timed words that are both compatible with one region word, cannot be `distinguished' using a DERA. This  can be proved using Lemma~$18$ of~\cite{GJL10}.

\subparagraph*{Symbolic languages.}
\label{page:zone-era-to-regions}
Note that, every $K$-DERA $A$ can be transformed into another $K$-DERA $A'$ where every guard present in $A'$ is a simple $K$-constraint. This can be constructed as follows: for every transition $(q, \sigma, g, q')$ in $A$ where $g$ is a $K$-constraint (and not a simple $K$-constraint), $A'$ contains the transitions $(q, \sigma, r, q')$ such that $r \models g$. 
Since, for every $K$-constraint $g$, there are only finitely-many (possibly, exponentially many) simple $K$-constraints $r$ that satisfy $g$, this construction indeed yields a finite automaton. As for an example, the transition from $q_1$ to $q_2$ in the automaton in Figure~\ref{fig:dera} contains a non-simple $K$-constraint as guard. This transition can be replaced with the following set of transitions without altering its timed language: $(q_1, b, x_a=1 \wedge x_b = 0, q_2), (q_1, b, x_a=1 \wedge 0 < x_b < 1, q_2), (q_1, b, x_a=1 \wedge x_b = 1, q_2), (q_1, b, x_a=1 \wedge x_b > 1, q_2)$. 
Note that, $A'$ in the above construction is different from the ``simple DERA'' for $A$, as defined in~\cite{GJL10}. A key difference is that,  
in $A'$ there can be paths leading to an accepting state that are not satisfiable by any timed words, which is not the case in the simple DERA of $A$. This relaxation ensures, $A'$ here has the same number of states as $A$, in contrast, in a simple DERA of~\cite{GJL10}, the number of states can be exponentially larger than the original DERA.

With the above construction in mind, given a $K$-DERA $A$, we associate two regular languages to it, called the {\em syntactic region language} of $A$ and the {\em region language} of $A$, both over the finite alphabet $\Sigma \times {\sf Reg}(\Sigma,K)$, as follows:
\label{page:lrw-lsrw}
  \begin{itemize}
        \item the syntactic region language of $A$, denoted $L^{\sf s \cdot rw}(A)$, is the set of region words ${\sf rw}=(\sigma_1,r_1)(\sigma_2,r_2)\dots (\sigma_m,r_m)$ such that there exists a sequence of states $q_0 q_1 \dots q_n$ in $A$ and $(i)$ $q_0=q_{\sf init}$, $q_n \in F$, and for all $1 \leq i \leq n$, there exists $e=(q_{i-1},\sigma,\psi,q_{i}) \in E$ such that $\sigma_i=\sigma$, and $r_i \models \psi$. 
        It is easy to see that $A$ plays the role of a DFA for this language and that this language is thus a regular language over $\Sigma \times {\sf Reg}(\Sigma,K)$.
        \item the (semantic) region language of $A$, denoted $L^{\sf rw}(A)$, is the subset of $L^{\sf s \cdot rw}(A)$ that is restricted to the set of region words that are consistent. This language is thus the intersection of  $L^{\sf s \cdot rw}(A)$ with ${\sf RegL}(\Sigma,K)$. And so, it is also a regular language over $\Sigma \times {\sf Reg}(\Sigma,K)$.
        (This is precisely the language accepted by a ``simple DERA'' corresponding to $A$, as defined in~\cite{GJL10}).
  \end{itemize}

\begin{example}
Let $A$ be the automaton depicted in Figure~\ref{fig:dera}.  
    Now, consider the region word $w_r := (a, x_a = 1 \wedge x_b = 1)(b, x_a = 1 \wedge x_b = 3)$. Note that, $\rw \in \lsrw{A}$, since there is the path $q_0 \xrightarrow[]{a,~\top} q_1 \xrightarrow[]{b,~x_a = 1} q_2$ in $A$ and (i) $(x_a = 1 \wedge x_b = 1) \models \top$, and (ii) $(x_a = 1 \wedge x_b = 3) \models (x_a = 1)$. 
    However, one can show that $\rw \notin \lrw{A}$. Indeed, $\sem{w_r} = \emptyset$, since for every clocked word $\mathsf{cw} = (a, v_1)(b, v_2)$, $\mathsf{cw} \models \rw$ only if $v_2(x_a) = 1 $ and $v_2(x_b) = 3$,
however, this is not possible, since in order for $v_2(x_a)$ to be $1$, the time difference between the $a$ and $b$ must be $1$-time unit, in which case, $v_2(x_b)$ will always be $2$ (because, $v_1(x_b) = 1$).
\end{example}

\begin{remark}
Given a DFA $C$ over the alphabet $\sevents$, one can interpret it as a K-DERA and associate the two --syntactic and semantic-- languages as defined above. Then, the regular language $L(C)$ is  the set of all region words in $L^{\sf s \cdot rw}(C)$, when $C$ is interpreted as a DERA. Similarly, abusing the notation, we write $L^{\sf rw}(C)$ to denote the set of all consistent region words accepted by $C$, and $L^{\sf tw}(C)$ to denote the set of all timed words accepted by $C$. This is a crucial remark to note, as \emph{we will use this notation often} in the rest of the paper.
\end{remark}

\begin{table}[t]
    \centering
    \begin{tabular}{c||c|c|c}
        symbolic word & $L(A)$ & $\lsrw{A}$ & $\lrw{A}$ \\
        \hline
        \hline
        $(a,\top)$ & \tick & \cross & \cross \\
        \hline
        $(a,~x_a=1\wedge x_b = 1)(b,~x_a=0\wedge x_b = 1)$ & \cross & \tick & \tick \\
        \hline
        $(a,~x_a>1\wedge x_b > 1)(b,~x_a>1 \wedge x_b = 0)$ & \cross & \tick & \cross \\
        \hline
    \end{tabular}
    \caption{Different symbolic languages for the automaton $A$ in Figure~\ref{fig:dera}. 
    }
    \label{tab:languages}
\end{table}

\section{greybox learning framework}
\label{sec:learning-algo}
In this work, we are given a target timed language $L$, and we are trying to infer a DERA that recognizes $L$. To achieve this, we propose a greybox learning algorithm that involves a {\tt Teacher} and a {\tt Student}.
Before giving details about this algorithm, we first make clear what are the assumptions that we make on the interaction between the {\tt Teacher} and the {\tt Student}. We then introduce some important notions that will be useful when we present the algorithm in Section~\ref{sec:tlsep}.

\subsection{The learning protocol}
\label{paragraph:learning-protocol}
The purpose of the greybox active learning algorithm that we propose is to build a DERA for an ERA-definable timed language $L$ using the following protocol between a so-called {\tt Teacher}, that knows the language, and {\tt Student} who tries to discover it. 
We assume that the {\tt Student} is aware of the alphabet $\Sigma$ and the maximum constant $K$ that can appear in a region word, thus,
due to Lemma~\ref{lem:regL-is-reg}, the {\tt Student} can then determine the set of \emph{consistent} region words $\regL$. 

\begin{lemma}
\label{lem:regL-is-reg}
Given an alphabet $\Sigma$ and a positive integer $K$, the set ${\sf RegL}(\Sigma,K)$ consisting of all consistent region words over $(\Sigma, K)$, forms a regular language over the alphabet $\Sigma \times {\sf Reg}(\Sigma,K)$ and it can be recognized by a finite automaton.
In the worst case, the size of this automaton is exponential both in $|\Sigma|$ and $K$.    
\end{lemma}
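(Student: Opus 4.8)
The plan is to construct a finite automaton $\mathcal R$ recognizing $\regL$ by adapting the classical region construction of Alur and Dill to the event-recording setting, and then to bound its size by counting regions. Fix the clock set $X_\Sigma$ with $|X_\Sigma| = |\Sigma|$ and recall that, by the property noted just after the definition of satisfaction of constraints, every classical region over $X_\Sigma$ (with maximal constant $K$) is either entirely contained in $\sem{r}$ or disjoint from it, for every simple $K$-constraint $r$; equivalently, each such region refines every simple constraint. I take the state set of $\mathcal R$ to be the set of classical regions over $X_\Sigma$, and I track, while reading a region word $(\sigma_1, r_1)\cdots(\sigma_n, r_n)$ left to right, the region of the clock valuation obtained \emph{after} processing each symbol (i.e.\ after resetting the recording clock of the event just read). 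The initial state is the region of the all-zero valuation, and — since consistency imposes no condition at the end other than the existence of a compatible timed word of the right length — every state is accepting.

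The transition relation encodes the two operations that govern how clocked-word valuations evolve (cf.\ Page~\pageref{paragraph:clocked-words}): a nonnegative time elapse between consecutive events, followed by the reset of the recording clock of the event just taken. Concretely, from a region $R$ on input $(\sigma, r)$ I put a transition to $R'$ whenever there is a \emph{time-successor} region $R''$ of $R$ (reachable from $R$ by letting time pass) such that $\sem{R''} \subseteq \sem{r}$ and $R'$ is obtained from $R''$ by resetting $x_\sigma$ to $0$. This relation is in general nondeterministic, because several time successors of $R$ may satisfy $r$; determinising by the subset construction, if a DFA is desired, only affects the size estimate by a further exponential and is not needed for the regularity claim. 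The correctness statement to establish is that $(\sigma_1,r_1)\cdots(\sigma_n,r_n)$ is consistent iff $\mathcal R$ has a run over it. Completeness is immediate: given a compatible timed word, its sequence of post-reset valuations realises a concrete path through the corresponding regions, hence a run of $\mathcal R$. Soundness is the converse: given a run $R_0 R_1 \cdots R_n$, I reconstruct a compatible timed word by choosing, step by step, concrete valuations inside each $R_i$ that are simultaneously realisable by a nonnegative time elapse from the previous one — this uses the standard fact that the time-successor and reset operations on regions are faithfully realisable by concrete delays, so that region equivalence is a time-abstract bisimulation.

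For the size bound it suffices to count classical regions over $d := |\Sigma|$ clocks with maximal constant $K$. A region is specified by, for each clock, one of the $2K+2$ possibilities ``$x = c$'' ($0 \le c \le K$), ``$x \in (c,c+1)$'' ($0 \le c < K$), or ``$x > K$'', together with a total preorder on the fractional parts of those clocks whose value lies in $[0,K]$. The number of such data is bounded by $(2K+2)^{d}$ times the ordered Bell number of $d$, i.e.\ by $(2K+2)^{d} \cdot d!\,2^{d}$ up to constants, which is exponential in $|\Sigma|$ and grows like $K^{|\Sigma|}$ in $K$; this gives the claimed worst-case upper bound on the number of states of $\mathcal R$.

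Finally, that this blow-up is unavoidable in the worst case can be shown by a Myhill--Nerode (fooling-set) argument: one exhibits a family of prefixes of consistent region words, indexed by the possible integer parts and fractional orderings of the recording clocks, that are pairwise distinguishable — for each pair there is a suffix consistent with one prefix but not the other — forcing any automaton for $\regL$ to keep a distinct state per reachable region. The main obstacle is the soundness half of the correctness proof, namely turning a purely symbolic run $R_0 \cdots R_n$ back into an honest timed word: one must verify that the greedy choice of concrete clock values respecting each time-successor step and each reset can always be carried out with nonnegative, monotone time stamps, which is exactly where the time-abstract bisimulation property of regions is needed.
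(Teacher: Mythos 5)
Your proof is correct, but it follows a genuinely different route from the paper's. The paper's proof is a reduction to already-known results: it takes the one-state universal ERA $A$ (a self-loop on every pair of $\sevents$, so $\ltw{A}$ is the set of all timed words), determinizes it via Lemma~\ref{lem:era-equals-dera}, and converts the result into a ``simple DERA'' $A^{sd}$ using the construction of~\cite{GJL10}; by the defining property of simple DERAs every region word accepted by $A^{sd}$ is consistent, and since $\ltw{A^{sd}}$ is all timed words, $L(A^{sd})$ must also contain every consistent region word, whence $L(A^{sd})=\regL$, with the exponential size bound inherited from~\cite{GJL10}. You instead build the automaton from first principles: states are classical regions over $X_\Sigma$, transitions implement a nonnegative time elapse (checked against the input simple constraint) followed by the reset of the recording clock, and you prove both directions of correctness explicitly, discharging the realizability of symbolic runs through the time-abstract bisimulation (pre-stability) property of regions; the size bound then follows from counting regions. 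What the paper's route buys is brevity: the entire content of your soundness step is hidden inside the cited simple-DERA construction. What your route buys is self-containment --- it relies neither on~\cite{GJL10} nor on determinization --- together with an explicit state count, and it makes visible exactly where the work lies (the reconstruction of a timed word from a region run), which you correctly identify and handle. Two small remarks. First, the nondeterminism you hedge against does not actually arise: along the time-successor chain of a fixed region, two distinct regions cannot satisfy the same simple constraint (equal integer parts mean no clock crosses an integer between them, so the fractional-part ordering is preserved as well), hence your transition relation is already deterministic and the subset-construction caveat is unnecessary. Second, your concluding fooling-set sketch of a matching lower bound is extra: the lemma, as the paper proves it, only asserts the exponential upper bound on the constructed automaton, so your proof should be seen as resting on the construction itself rather than on that (only sketched) remark.
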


\begin{proof}[Proof sketch]
    Consider the one-state ERA where the state contains a transition to itself on every pair from $\Sigma \times \mathsf{Reg}(\Sigma, K)$. Now, due to Lemma~\ref{lem:era-equals-dera}, we know there exists a DERA $A^d$ such that $\ltw{A} = \ltw{A^d}$. Moreover, due to results presented in~\cite{GJL10}, we know $A^d$ can be transformed into a ``simple DERA'' $A^{sd}$ such that $\ltw{A^d} = \ltw{A^{sd}}$. Now, from the definition of simple DERA, we get that for every region word $\rw \in L(A^{sd})$, $\sem{\rw} \neq \emptyset$, i.e. $L(A^{sd}) \subseteq \regL$. On the other hand, since the language of $A$ is the set of all timed words, this means, $L(A^{sd})$ must contain the set of all consistent region words which is $\regL$. Therefore, we get that $L(A^{sd}) = \regL$. The size of $A^{sd}$ is exponential in both $|\Sigma|$ and $K$  (\cite{GJL10}). 
\end{proof}

Since the {\tt Student} can compute the automaton for $\regL$, we assume that the {\tt Student} only poses membership queries for \emph{consistent} region words, that is, region words from the set $\regL$.
The {\tt Student}, that is the learning algorithm, is allowed to formulate two types of queries to the {\tt Teacher}:

\smallskip
{\bf Membership queries}:
    given a \emph{consistent} region word $\rw$ over the alphabet $\Sigma \times \mathsf{Reg}(\Sigma, K)$, {\tt Teacher} answers {\tt Yes} if $\sem{\rw} \subseteq L$, 
    and {\tt No} if $\sem{\rw} \not\subseteq L$ which, for region words, is equivalent to $\sem{\rw} \cap L=\emptyset$ (cf. Lemma~\ref{lem:simple-region}). 

 {\bf Equivalence queries}: 
    given a DERA $A$, {\tt Student} can ask two types of inclusion queries: $\ltw{A} \subseteq L$, or $L \subseteq \ltw{A}$. The {\tt Teacher} answers {\tt Yes}  to an inclusion query if the inclusion holds, otherwise returns a counterexample.

\subparagraph*{Comparison with the work in~\cite{LADSL11}.}
The queries that are allowed in our setting are {\em semantic} queries, so, they can be answered if the {\tt Teacher} knows the timed language $L$ and the answers to the queries are not bound by the particular ERA that the {\tt Teacher} has for the language.
This is in contrast with the learning algorithm proposed in~\cite{LADSL11}. Indeed, the learning algorithm proposed there considers queries that are answered for a specific DERA that the {\tt Teacher} uses as a reference to answer those queries. In particular, a membership query in their setting is formulated according to a symbolic word $w_z$, i.e. a finite word over the alphabet $\Sigma \times \SC(\Sigma,K)$. 
The membership query about the symbolic word $w_z$ will be answered positively if there is a run that follows edges exactly annotated by this symbolic word in the automaton and that leads to an accepting state. For the equivalence query, the query asks if a candidate DERA is equal to the DERA that is known by the {\tt Teacher}. Again, the answer is based on the syntax and not on the semantics. For instance, 
consider the automaton $A$ in Figure~\ref{fig:dera}. When a membership query is performed in their algorithm for the symbolic word $w_z := (a,\top)(b,x_a = 1)(a,x_a = 1)$, even though $\sem{w_z} \subseteq \ltw{A}$, their {\tt Teacher} will respond negatively -- this is because the guard on the transition $q_2 \xrightarrow{a, x_a \leq 1} q_1$ in $A$, does not syntactically match the expression $x_a = 1$, which is present in the third position of $w_z$. That is, when treating $A$ as a DFA with the alphabet being $\Sigma \times \SC(\Sigma, K)$, $w_z \notin L(A)$. 
So, the algorithm in~\cite{LADSL11} essentially learns the regular languages of symbolic words over the alphabet $\Sigma \times \SC(\Sigma,K)$ rather than the semantics of the underlying timed language. As a corollary, our result (which we develop in Section~\ref{sec:tlsep}) on returning a DERA with the minimum number of states holds even if the input automaton has more states, contrary to~\cite{LADSL11}.

\subsection{The region language associated to a timed language} 

\begin{definition}[The region language]
\label{def:region-lang}
    Given a timed language $L$, we define its $K$-region language as the set $\RW{L} := \{\rw \in \regL \mid \sem{\rw} \subseteq L\}$.
\end{definition}
For every ERA-definable timed language $L$, its $K$-region language $\RW{L}$ is uniquely determined due to Lemma~\ref{lem:simple-region}.
We now present a set of useful results regarding the set $\RW{L}$.
\begin{lemma}
\label{lem:timed-language-union-of-regions}
    Let $L$ be an ERA-definable timed language, and let $A$ be a $K$-DERA recognizing $L$. Then, the following statements hold:
    \begin{enumerate}
        \item $\bigcup_{\rw \in \RW{L}} \sem{\rw} = L$, 
        \item $\RW{\overline{L}} = \overline{\RW{L}} \cap \regL$, where $\overline{L}$ denotes the complement of $L$\footnote{The complement of $L$ is also ERA-definable, since the class of ERA is closed under complementation~\cite{AFH99}},
    \end{enumerate}
\end{lemma}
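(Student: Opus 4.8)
The plan is to prove both parts by unfolding the definition $\RW{L} = \{\rw \in \regL \mid \sem{\rw} \subseteq L\}$ and leaning on the two structural facts already established: every timed word determines a unique consistent region word (Lemma~\ref{lem:rw-for-tw}), and each consistent region word sits either entirely inside or entirely outside an ERA-definable language (Lemma~\ref{lem:simple-region}).

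For the first statement I would prove the two inclusions separately. The inclusion $\bigcup_{\rw \in \RW{L}} \sem{\rw} \subseteq L$ is immediate: if $\tw \in \sem{\rw}$ for some $\rw \in \RW{L}$, then $\sem{\rw} \subseteq L$ by the very definition of $\RW{L}$, hence $\tw \in L$. For the reverse inclusion, take any $\tw \in L$. By Lemma~\ref{lem:rw-for-tw} there is a (unique) $\rw \in \regL$ with $\tw \in \sem{\rw}$. Applying Lemma~\ref{lem:simple-region} to the DERA $A$ recognizing $L$, either $\sem{\rw} \subseteq L$ or $\sem{\rw} \cap L = \emptyset$; since $\tw$ witnesses $\sem{\rw} \cap L \neq \emptyset$, we must be in the first case, so $\rw \in \RW{L}$ and therefore $\tw \in \bigcup_{\rw' \in \RW{L}} \sem{\rw'}$.

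For the second statement, I would first make explicit that the complement $\overline{\RW{L}}$ is taken inside $(\sevents)^*$, so that $\overline{\RW{L}} \cap \regL = \{\rw \in \regL \mid \sem{\rw} \not\subseteq L\}$ by definition of $\RW{L}$. The goal is then to show this set equals $\RW{\overline{L}} = \{\rw \in \regL \mid \sem{\rw} \subseteq \overline{L}\}$. Fix a consistent $\rw \in \regL$; by Lemma~\ref{lem:simple-region} exactly one of $\sem{\rw} \subseteq L$ and $\sem{\rw} \cap L = \emptyset$ holds, the two cases being mutually exclusive precisely because $\sem{\rw} \neq \emptyset$ for a consistent word. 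Hence $\sem{\rw} \not\subseteq L$ is equivalent to $\sem{\rw} \cap L = \emptyset$, which, using $\sem{\rw} \neq \emptyset$ once more, is equivalent to $\sem{\rw} \subseteq \overline{L}$. Chaining these equivalences shows that membership in the two sets coincides, proving the identity.

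I expect the only delicate point to be the step that converts ``$\sem{\rw} \not\subseteq L$'' into ``$\sem{\rw} \subseteq \overline{L}$'' in the second part: for an arbitrary set this implication fails, and it is exactly the all-or-nothing dichotomy of Lemma~\ref{lem:simple-region} together with the nonemptiness guaranteed by $\rw \in \regL$ that makes it valid. Everything else is a direct unfolding of Definition~\ref{def:region-lang}.
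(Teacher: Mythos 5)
Your proof is correct and follows essentially the same route as the paper's: part 1 is proved identically (the definition gives one inclusion, and Lemma~\ref{lem:rw-for-tw} plus the dichotomy of Lemma~\ref{lem:simple-region} gives the other), and part 2 rests on the same observation that consistency together with the dichotomy converts $\sem{\rw} \not\subseteq L$ into $\sem{\rw} \cap L = \emptyset$. You are in fact slightly more thorough than the paper on part 2: its written proof only establishes the inclusion $\RW{\overline{L}} \subseteq \overline{\RW{L}} \cap \regL$, whereas your chain of equivalences also yields the converse inclusion, which the paper leaves implicit.
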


\begin{proof}
    \begin{enumerate}
        \item From Definition~\ref{def:region-lang}, we get that $\bigcup_{\rw \in \RW{L}} \sem{\rw} \subseteq L$. Conversely, let $\tw \in L$. From Lemma~\ref{lem:rw-for-tw} we know there exists a unique $\rw \in \regL$ such that $\tw \in \sem{\rw}$. Then we know $\sem{\rw} \cap L \neq \emptyset$, which implies (using Lemma~\ref{lem:simple-region}) $\sem{\rw} \subseteq L$. Therefore, $\rw \in \RW{L}$. This shows first equality in the statement.
        The second equality holds since $A$ recognizes $L$.
        For the third equality, again let $\tw \in \ltw{A}$, then consider the unique region word $\rw$ such that $\tw \in \sem{\rw}$. It must be the case that $\rw \in \lrw{A}$. Hence, we get $\ltw{A} \subseteq \bigcup_{\rw \in \lrw{A}} \sem{\rw}$. On the other hand, let $\tw \in \sem{\rw}$ for a $\rw \in \lrw{A}$.
        Then, clearly $\tw \in \ltw{A}$.
        \item Let $\rw \in \RW{\overline{L}}$, then from Definition~\ref{def:region-lang}, $(\emptyset \neq) \sem{\rw} \subseteq \overline{L}$. Since $L \cap \overline{L} = \emptyset$, we get $\sem{\rw} \cap L = \emptyset$ and hence $\rw \in \overline{\RW{L}} \cap \regL$.
    \end{enumerate}\end{proof}

In~\cite{GJL10}, the authors prove that for every ERA-definable timed language $L$, there exists a $K$-DERA $A$, for some positive integer $K$, such that $\ltw{A} = L$. From the construction of such an $A$, it in fact follows that $\lrw{A} = \RW{L}$. The authors then present an extension of the $L^*$-algorithm~\cite{Angluin87} that is able to learn a minimal DFA for the language $\lrw{A}$. 
Now, the following lemma shows that a minimal DFA for the regular language $\lrw{A}$ can be \emph{exponentially} larger than a minimal DFA for the language $\lsrw{A}$.
In this paper, we show how to cast the problem of learning a minimal DFA for $L^{\sf s \cdot rw}(A)$ using an adaptation of the greybox active learning framework proposed by Chen et al. in~\cite{CFCTW09}. 

\begin{lemma}
\label{lem:srw-to-rw}
 \label{lem:complement-rw}
 Let $L$ be an ERA-definable timed language and $A$ be 
a $K$-DERA recognizing $L$, then the following statements hold:
 \begin{enumerate}
     \item 
 $\RW{L}=L^{\sf s \cdot rw}(A) \cap {\sf RegL}(\Sigma,K) = \lrw{A}$,
 \item
 $\RW{\overline{L}} = \lsrw{\overline{A}} \cap \regL = \lrw{\overline{A}}$,
 \item The minimal DFA over $\Sigma \times {\sf Reg}(\Sigma,K)$ that recognizes $L^{\sf s \cdot rw}(A)$ can be exponentially smaller than the minimal DFA that recognizes $L^{\sf rw}(A)$,  
 \item There exists a DFA with a number of states equal to the number of states in $A$ that accepts $L^{\sf s \cdot rw}(A)$.
 \end{enumerate}
\end{lemma}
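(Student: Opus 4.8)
The plan is to establish the four items in the order (1), (2), (4), (3): items (1) and (2) are the definitional core, (4) is an easy structural observation, and (3) is the genuine separation result, which I expect to be the only real obstacle. Throughout I rely on the dichotomy that for a simple constraint $r$ and \emph{any} $K$-constraint $\psi$ one has either $\sem{r}\cap\sem{\psi}=\emptyset$ or $\sem{r}\subseteq\sem{\psi}$; this is stated in the preliminaries for elementary $\psi$ and extends conjunct-by-conjunct to arbitrary $K$-constraints, since a simple constraint localizes each clock inside a single elementary region and every atomic $K$-constraint has integer endpoints.

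For item (1) I would prove the two equalities separately. The equality $\lsrw{A}\cap\regL=\lrw{A}$ is immediate, because $\lrw{A}$ is \emph{defined} as the restriction of $\lsrw{A}$ to consistent region words. For $\RW{L}=\lsrw{A}\cap\regL$ I argue by double inclusion. Given $\rw\in\RW{L}$, it is consistent, hence lies in $\regL$, and $\sem{\rw}\subseteq L=\ltw{A}$; choosing any $\tw\in\sem{\rw}$ (nonempty by consistency), an accepting run of $A$ on $\tw$ reads guards $\psi_i$ with $v_i\models\psi_i$, and since $\tw\models\rw$ also gives $v_i\models r_i$, the dichotomy forces $\sem{r_i}\subseteq\sem{\psi_i}$, i.e. $r_i\models\psi_i$; the same path thus witnesses $\rw\in\lsrw{A}$. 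Conversely, if $\rw\in\lsrw{A}\cap\regL$ then its witnessing path satisfies $r_i\models\psi_i$, and for \emph{every} $\tw\in\sem{\rw}$ we have $v_i\in\sem{r_i}\subseteq\sem{\psi_i}$, so that path is an accepting run of $A$ on $\tw$; hence $\sem{\rw}\subseteq L$ and $\rw\in\RW{L}$. Item (2) is then obtained by applying item (1) verbatim to the language $\overline{L}$ and a $K$-DERA $\overline{A}$ recognizing it, whose existence is guaranteed by Lemma~\ref{lem:era-equals-dera}.

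For item (4) I would reinterpret $A=(Q,q_{\sf init},\Sigma,E,F)$ as a DFA $D$ over the alphabet $\sevents$ on the \emph{same} state set $Q$, setting $\delta(q,(\sigma,r))=q'$ whenever $(q,\sigma,\psi,q')\in E$ and $r\models\psi$. The only point to verify is that $\delta$ is a well-defined partial function: if $r\models g_1$ and $r\models g_2$ for two $\sigma$-transitions out of $q$, then $\emptyset\neq\sem{r}\subseteq\sem{g_1}\cap\sem{g_2}$, which contradicts determinism of $A$ unless the two transitions share their target. By construction $L(D)=\lsrw{A}$ and $D$ has exactly $|Q|$ states.

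The main difficulty is item (3), the exponential gap. My choice of witness is the single-state \emph{universal} DERA $A_n$ over $\Sigma_n=\{a_1,\dots,a_n\}$ carrying a self-loop on every pair of $\sevents$ (these guards are pairwise disjoint, so $A_n$ is a valid DERA): here $\lsrw{A_n}=(\sevents)^*$ is accepted by a one-state DFA by item (4), whereas $\lrw{A_n}=\lsrw{A_n}\cap\regL=\regL$. It then suffices to show that the minimal DFA for $\regL$ has $2^{\Omega(n)}$ states, complementing the upper bound of Lemma~\ref{lem:regL-is-reg}. I would do this with a Myhill--Nerode / fooling-set argument: for each ordering of the fractional parts of $x_{a_1},\dots,x_{a_n}$ I build a consistent region-word prefix realizing it, and show that two distinct orderings are separated by a short continuation forcing a clock comparison that is consistent after one prefix but inconsistent after the other, yielding at least $n!=2^{\Omega(n\log n)}$ Nerode classes. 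The same gap can alternatively be read off from~\cite{GJL10}, since $\lrw{A}$ is exactly the language of the simple DERA of $A$, whose size is known to be exponential. The delicate part on which I expect to spend the most effort is verifying that these prefixes are genuinely consistent and that each separating continuation remains inside $\regL$ while distinguishing the two classes.
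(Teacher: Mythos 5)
Your proposal is correct, and items (1), (2) and (4) follow essentially the paper's own route: (1) and (2) by double inclusion from the definitions (your explicit use of the dichotomy that for simple $r$ and any guard $\psi$ either $\sem{r}\subseteq\sem{\psi}$ or $\sem{r}\cap\sem{\psi}=\emptyset$ is exactly what makes the paper's terse ``since $A$ recognizes $L$'' step rigorous), and (4) is the same reinterpretation of $A$ as a DFA over $\sevents$ that the paper performs on Page~\pageref{page:zone-era-to-regions}, to which you add the useful check that determinism of $A$ makes the transition function well defined. Where you genuinely diverge is item (3). The paper argues by complexity: emptiness of $\ltw{A}$ is $\mathsf{PSPACE}$-complete, $\ltw{A}=\emptyset$ iff $\lrw{A}=\emptyset$, and DFA emptiness is decidable in polynomial time, so the minimal DFA for $\lrw{A}$ must, in the worst case, be exponentially larger than the one for $\lsrw{A}$ (which stays small by item (4)). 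You instead exhibit an explicit witness: the one-state universal DERA $A_n$, for which $\lsrw{A_n}=(\sevents)^*$ has a one-state DFA while $\lrw{A_n}=\regL$, together with a Myhill--Nerode lower bound of $n!$ classes for $\regL$ obtained from fractional-part orderings. Your route costs more work --- the fooling-set details you flag as delicate really do hinge on Lemma~\ref{lem:simple-region}, i.e.\ that a consistent region word determines a unique classical region and hence a unique ordering of fractional parts, so that distinct orderings yield distinct, separable prefixes --- but it buys an unconditional, self-contained separation. The paper's argument is shorter but is implicitly conditional: $\mathsf{PSPACE}$-hardness rules out \emph{computing} a small DFA quickly, not its \emph{existence}, so turning it into a size lower bound requires an extra step (e.g., a polynomial-size DFA for $\lrw{A}$ would give short accepted words, hence an $\mathsf{NP}$ procedure via guessing a word and checking consistency by linear programming, contradicting $\mathsf{PSPACE}\neq\mathsf{NP}$). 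One caution: your fallback of reading the gap off from~\cite{GJL10} does not suffice by itself, since the exponential size of the simple DERA there is an upper bound on a particular construction, not a lower bound on the minimal DFA for $\lrw{A}$; your Nerode argument is what actually carries the proof.
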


\begin{proof}
    \begin{enumerate}
        \item The second equality comes from the definition of the sets $\lrw{A}$ and $\lsrw{A}$ (cf. Page~\pageref{page:lrw-lsrw}). For the first equality, since $A$ recognizes $L$, we know that for every $\rw \in \RW{L}$, $\rw \in \lrw{A}$ and hence (from the definition of $\lrw{A}$) $\rw \in \lsrw{A} \cap \regL$. On the other hand, 
        if $\rw \in \lsrw{A} \cap \regL$, then $\rw \in \lrw{A}$ and thus $\sem{\rw} \subseteq L$.
        \item This can be proved similarly to the previous result.
        \item We know that checking emptiness of the timed language of an ERA is $\mathsf{PSPACE}$-complete~\cite{AFH99}.
        Now, we can reduce the emptiness checking problem of an ERA $A$ to checking if $\lrw{A}$ is empty or not. Note that, from Lemma~\ref{lem:rw-for-tw} and~\ref{lem:simple-region}, $\ltw{A} = \emptyset$ iff $\lrw{A} = \emptyset$. Since it is possible to decide in polynomial time if the language of a DFA is empty or not, this implies, the DFA recognizing $\lrw{A}$ must be exponentially larger (in the worst case) than the DFA recognizing $\lsrw{A}$.
        \item Note that, $\lsrw{A}$ contains inconsistent region words. Now, $L(A)$ may contain symbolic words (and not only region words). Following the construction presented in Page~\pageref{page:zone-era-to-regions}, we know that from $A$ we can construct another $K$-DERA $A'$ with the same number of states as that of $A$ such that (i) $A'$ contains only $K$-simple constraints as its guards and (ii) $\ltw{A} = \ltw{A'}$. It can then be verified easily that $L(A') = \lsrw{A}$. This is because, for every region word $\rw \in \lsrw{A}$, there exists a path in $A'$ corresponding to $\rw$ (that is, transitions marked with the letter-simple constraint pairs in $\rw$) leading to an accepting state in $A'$.
    \end{enumerate}
\end{proof}

Note that, the first equation in Lemma~\ref{lem:srw-to-rw} is similar to Equation~\ref{eq:intersection} that we introduced in Section~\ref{sec:intro}. This equation is the cornerstone of our approach, and this is what justifies the advantage of our approach over existing approaches.
It is crucial to understand that for a fixed $\Sigma$ and $K$, the regular language ${\sf RegL}(\Sigma,K)$ present in this equation is already known and does not require learning. Consequently, the learning method we are about to describe focuses on discovering an automaton $C$ that satisfies this equation, and thereby it is a {\em greybox} learning algorithm (as coined in~\cite{DBLP:conf/nfm/0002R16}) since a part of this equation is already known.

In order for an automaton $C$ to satisfy the first equation in Lemma~\ref{lem:srw-to-rw}, we can deduce that $C$ must (i) accept all the region words in $\RW{L}$, $\ie$ all consistent region words $\rw$ for which $\sem{\rw} \subseteq L$, (ii) reject all the consistent region words $\rw$ for which $\sem{\rw} \subseteq \overline{L}$, and (iii) it can do either ways with inconsistent region words. 
This is precisely the flexibility that allows for solutions $C$ that have fewer states than the minimal DFA for 
the language $\RW{L}$.
The fourth statement in Lemma~\ref{lem:srw-to-rw} implies that there is a 
DFA over the alphabet $\Sigma \times {\sf Reg}(\Sigma,K)$ that has the same number of states as that of a minimal DERA recognizing the timed language $L$. 
We give an overview of our greybox learning algorithm in Section~\ref{sec:tlsep}.

In the following, we establish a strong correspondence between the timed languages and the region languages of two DERAs:
\begin{lemma}
    \label{lem:rw-iff-tw}
    Given two DERAs $A$ and $B$, $\lrw{A} \subseteq \lrw{B}$ iff $\ltw{A} \subseteq \ltw{B}$.
\end{lemma}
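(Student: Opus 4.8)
The plan is to route both implications through a single bridge identity that reinterprets membership in the region language as a semantic containment. Combining Definition~\ref{def:region-lang} with the first equality of Lemma~\ref{lem:srw-to-rw}, for any DERA $A$ we have
\[
\lrw{A} = \RW{\ltw{A}} = \{\rw \in \regL \mid \sem{\rw} \subseteq \ltw{A}\},
\]
and likewise $\lrw{B} = \{\rw \in \regL \mid \sem{\rw} \subseteq \ltw{B}\}$. So a consistent region word belongs to $\lrw{A}$ exactly when all timed words compatible with it are accepted by $A$. With this characterization in hand, both directions reduce to short arguments, the only genuinely useful extra ingredient being Lemma~\ref{lem:simple-region}, which lets us pass between ``one timed word of $\sem{\rw}$ is accepted'' and ``all of $\sem{\rw}$ is accepted''.

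For the direction $\lrw{A} \subseteq \lrw{B} \Rightarrow \ltw{A} \subseteq \ltw{B}$, I would take an arbitrary $\tw \in \ltw{A}$ and invoke Lemma~\ref{lem:rw-for-tw} to obtain the unique consistent region word $\rw \in \regL$ with $\tw \in \sem{\rw}$. Since $\tw \in \sem{\rw} \cap \ltw{A}$, Lemma~\ref{lem:simple-region} upgrades this to $\sem{\rw} \subseteq \ltw{A}$, so by the bridge identity $\rw \in \lrw{A}$. The hypothesis gives $\rw \in \lrw{B}$, and applying the bridge identity to $B$ yields $\sem{\rw} \subseteq \ltw{B}$; as $\tw \in \sem{\rw}$ we conclude $\tw \in \ltw{B}$.

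For the converse $\ltw{A} \subseteq \ltw{B} \Rightarrow \lrw{A} \subseteq \lrw{B}$, I would simply take $\rw \in \lrw{A}$. By the bridge identity $\rw \in \regL$ and $\sem{\rw} \subseteq \ltw{A} \subseteq \ltw{B}$, so again by the bridge identity $\rw \in \lrw{B}$. This direction is immediate and uses only the characterization, not Lemma~\ref{lem:simple-region} nor Lemma~\ref{lem:rw-for-tw}.

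I do not expect a serious obstacle, since the heavy lifting is already done by Lemma~\ref{lem:srw-to-rw} and the union characterization in the proof of Lemma~\ref{lem:timed-language-union-of-regions}. The one point to state carefully is the role of \emph{consistency}: because $\rw \in \regL$ forces $\sem{\rw} \neq \emptyset$, the containment $\sem{\rw} \subseteq \ltw{B}$ is not vacuous and genuinely produces accepted timed words, which is exactly what makes the forward argument pull $\tw$ through to $\ltw{B}$. Without restricting to consistent region words the equivalence would fail, so the fact that $\lrw{\cdot}$ ranges only over $\regL$ is essential and should be flagged explicitly.
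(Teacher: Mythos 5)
Your proof is correct, and it rests on the same facts the paper uses, but it distributes the work between the two directions differently, so it is worth comparing. Your pivot is the bridge identity $\lrw{A} = \RW{\ltw{A}} = \{\rw \in \regL \mid \sem{\rw} \subseteq \ltw{A}\}$, obtained from Definition~\ref{def:region-lang} and the first statement of Lemma~\ref{lem:srw-to-rw}; with it, the direction $\ltw{A} \subseteq \ltw{B} \Rightarrow \lrw{A} \subseteq \lrw{B}$ is a one-line containment argument, and the witness-chasing (Lemma~\ref{lem:rw-for-tw} together with Lemma~\ref{lem:simple-region}) is pushed into the direction $\lrw{A} \subseteq \lrw{B} \Rightarrow \ltw{A} \subseteq \ltw{B}$. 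The paper does the mirror image: it gets $(\Rightarrow)$ essentially for free by citing the union characterization $\ltw{A} = \bigcup_{\rw \in \lrw{A}} \sem{\rw}$ from Lemma~\ref{lem:timed-language-union-of-regions}, and it performs the witness-picking in $(\Leftarrow)$: take $w \in \lrw{A}$, which is consistent, choose $\tw \in \sem{w}$, push $\tw$ through the timed-language inclusion, and recover $w \in \lrw{B}$ via Lemmas~\ref{lem:rw-for-tw} and~\ref{lem:simple-region}. The two proofs are thus dual organizations of the same content. What your version buys: the semantic characterization of $\lrw{\cdot}$ is stated once and explicitly, both directions become short, and it makes visible that the paper's final step in $(\Leftarrow)$ --- deducing $w \in \lrw{B}$ --- implicitly needs exactly that characterization, since Lemma~\ref{lem:simple-region} alone only yields $\sem{w} \subseteq \ltw{B}$ and one still needs Lemma~\ref{lem:srw-to-rw} (or a direct run-based argument) to convert this containment into membership in $\lrw{B}$. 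What the paper's version buys: the forward direction requires no new argument at all, as it reuses an already-proved result. One tiny inaccuracy on your side: in your forward direction you only need the \emph{existence} of the region word from Lemma~\ref{lem:rw-for-tw}, not its uniqueness. Your closing observation about consistency is accurate and worth keeping: for the syntactic languages $\lsrw{\cdot}$ the equivalence would fail, since distinct DERAs recognizing the same timed language generally have different syntactic region languages.
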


\begin{proof}
    $(\Rightarrow)$ This direction follows from Lemma~\ref{lem:timed-language-union-of-regions}.
    \smallskip

    $(\Leftarrow)$
    Let $w \in \lrw{A}$, then we know that $\sem{w} \neq \emptyset$. Choose a timed word $\tw \in \sem{w}$. Clearly, $\tw \in \ltw{A}$ and from the assumption of the lemma, we get $\tw \in \ltw{B}$. From Lemma~\ref{lem:rw-for-tw} we know that $w$ is the only region word such that $\tw \in \sem{w}$. Finally, since $\tw \in \ltw{B}$, we can deduce from Lemma~\ref{lem:simple-region} $w \in \lrw{B}$.
\end{proof}

\subsection{(Strongly-)complete 3DFAs} 

\begin{figure}[t]
  \centering
  \begin{subfigure}{.45\textwidth}
  \centering
      \includegraphics[scale=0.17]{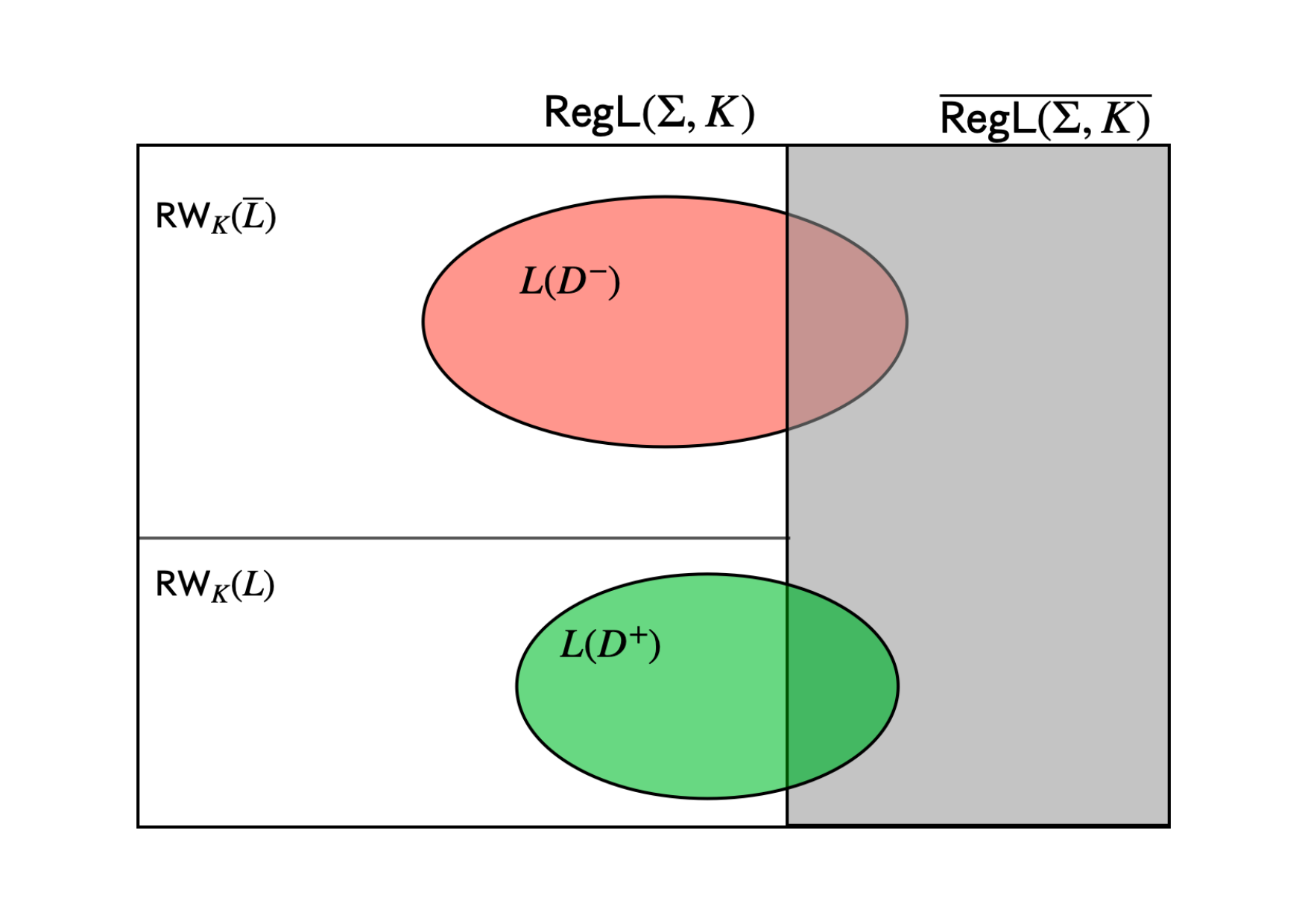}
      \caption{Completeness}
      \label{fig:illustration-completeness}
  \end{subfigure}
  \begin{subfigure}{.45\textwidth}
  \centering
      \includegraphics[scale=0.17]{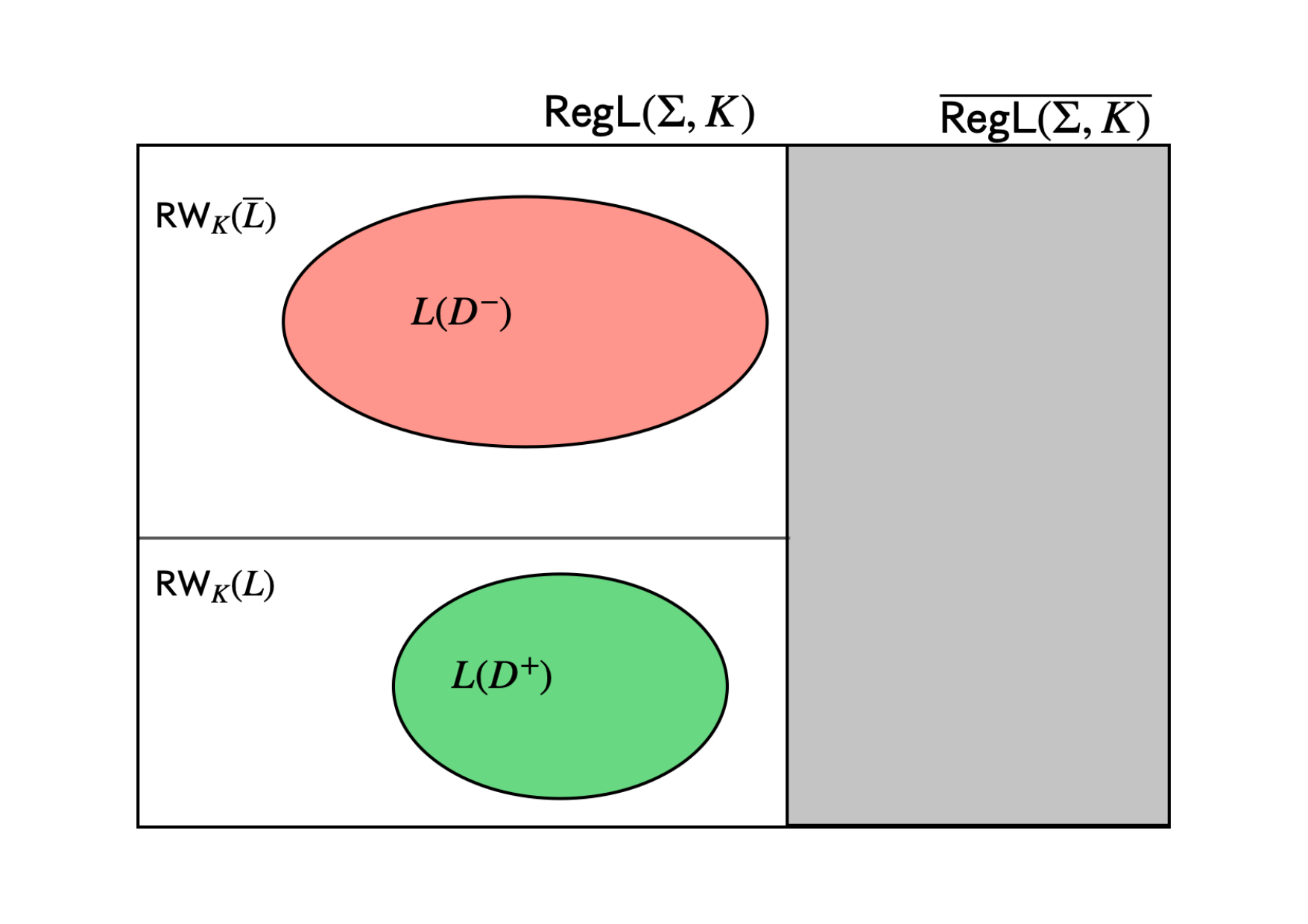}
      \caption{Strong-completeness}
      \label{fig:s-complete}
  \end{subfigure}
\caption{Illustration of the two completeness criteria}
\label{fig:comp-and-s-comp}
\end{figure}
In this part, we define some concepts regarding 3DFAs that we will use in our greybox learning algorithm. These concepts are adaptations of similar concepts presented in~\cite{CFCTW09} to the timed setting.
  \begin{definition}[Completeness]
      \label{def:completeness}
       A 3DFA $D$ over $\sevents$ is {\em complete} w.r.t. the timed language  $L$ if
           the timed language of $D^+$, when interpreted as a DERA, is a subset of $L$,
       \ie, $\ltw{D^+} \subseteq L$; and
       the timed language of $D^-$, when interpreted as a DERA, is a subset of $\overline{L}$, \ie, 
        $\ltw{D^-} \subseteq \overline{L}$.
  \end{definition}
Applying Lemma~\ref{lem:srw-to-rw} and~\ref{lem:rw-iff-tw} to the above definition, we get the following result:
\begin{lemma}
\label{lem:complete-for-region}
Let $L$ be a timed language over $\Sigma$ recognized by a $K$-DERA $A$ over $\Sigma$, then for any 3DFA $D$ over $\sevents$,
    $D$ is complete w.r.t. $L$ iff $\lrw{D^+} \subseteq \lrw{A} = \RW{L}$ and $\lrw{D^-} \subseteq \lrw{\overline{A}} = \RW{\overline{L}}$.
\end{lemma}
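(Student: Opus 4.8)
The plan is to chain Definition~\ref{def:completeness} together with the two transfer results already established: Lemma~\ref{lem:rw-iff-tw}, which converts timed-language inclusions between DERAs into region-language inclusions, and Lemma~\ref{lem:srw-to-rw}, which identifies $\lrw{A}$ with $\RW{L}$ and $\lrw{\overline{A}}$ with $\RW{\overline{L}}$. Since every object involved turns out to be deterministic, the argument reduces to a single chain of biconditionals with no essential computation.

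First I would unfold completeness. By Definition~\ref{def:completeness}, $D$ is complete w.r.t.\ $L$ iff $\ltw{D^+} \subseteq L$ and $\ltw{D^-} \subseteq \overline{L}$. Because $A$ recognizes $L$, and because Lemma~\ref{lem:era-equals-dera} guarantees a $K$-DERA $\overline{A}$ with $\ltw{\overline{A}} = \overline{L}$, these two inclusions can be rewritten as $\ltw{D^+} \subseteq \ltw{A}$ and $\ltw{D^-} \subseteq \ltw{\overline{A}}$.

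Before invoking Lemma~\ref{lem:rw-iff-tw}, I need to check that $D^+$ and $D^-$, read as ERAs over the alphabet $\sevents$, are genuinely \emph{deterministic}; this is the one step that requires care. A letter of $D$ is a pair $(\sigma,r)$ with $r$ a simple $K$-constraint, and interpreting $D$ as an ERA turns such a letter into an edge on $\sigma$ with guard $r$. For any state $q$ and event $\sigma$, two distinct outgoing edges carry distinct region guards $r_1 \neq r_2$; since each clock valuation satisfies exactly one simple $K$-constraint (as used in the proof of Lemma~\ref{lem:rw-for-tw}), we have $\sem{r_1} \cap \sem{r_2} = \emptyset$, so the determinism condition of Definition~\ref{def:dera} holds. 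Hence $D^+$ and $D^-$ are $K$-DERAs, and Lemma~\ref{lem:rw-iff-tw} applies to each of them paired with $A$, respectively $\overline{A}$.

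Applying Lemma~\ref{lem:rw-iff-tw} then gives $\ltw{D^+} \subseteq \ltw{A}$ iff $\lrw{D^+} \subseteq \lrw{A}$, and likewise $\ltw{D^-} \subseteq \ltw{\overline{A}}$ iff $\lrw{D^-} \subseteq \lrw{\overline{A}}$. Finally, Lemma~\ref{lem:srw-to-rw} supplies $\lrw{A} = \RW{L}$ and $\lrw{\overline{A}} = \RW{\overline{L}}$, so the two region-language inclusions become $\lrw{D^+} \subseteq \RW{L}$ and $\lrw{D^-} \subseteq \RW{\overline{L}}$, which is exactly the claimed equivalence. The only genuine obstacle is the determinism check above; once that is secured, the rest is a mechanical substitution through the cited lemmas.
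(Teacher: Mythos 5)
Your proof is correct and follows essentially the same route as the paper's: unfold Definition~\ref{def:completeness}, replace $L$ and $\overline{L}$ by $\ltw{A}$ and $\ltw{\overline{A}}$, transfer the timed inclusions to region-language inclusions via Lemma~\ref{lem:rw-iff-tw}, and finish with the identities $\lrw{A} = \RW{L}$ and $\lrw{\overline{A}} = \RW{\overline{L}}$ from Lemma~\ref{lem:srw-to-rw}. Your explicit verification that $D^+$ and $D^-$ are deterministic when read as ERAs (distinct simple $K$-constraints have disjoint semantics, so Lemma~\ref{lem:rw-iff-tw} genuinely applies) is a precondition the paper's proof uses silently, and it is a sound and worthwhile addition.
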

\begin{proof}
    Since $A$ accepts $L$, so $\ltw{A} = L$, and since $A$ is a DERA, $\ltw{\overline{A}} = \overline{L}$.
    Therefore, $\ltw{D^+} \subseteq L$ iff $\ltw{D^+} \subseteq \ltw{A}$ iff $\lrw{D^+} \subseteq \lrw{A}$ (using Lemma~\ref{lem:rw-iff-tw}).
    Similarly, $\ltw{D^-} \subseteq \overline{L}$ iff $\ltw{D^-} \subseteq \ltw{\overline{A}}$ iff $\lrw{D^-} \subseteq \lrw{\overline{A}}$. 
    Then from Lemma~\ref{lem:srw-to-rw}, we know that $\lrw{A} = \RW{L}$, and also $\lrw{\overline{A}} = \RW{\overline{L}}$, therefore, the statement follows.
\end{proof}

However, to show the minimality in the number of states of the final DERA returned by our greybox learning algorithm, we introduce \emph{strong-completeness}, by strengthening the condition on the language accepted by a 3DFA $D$ as follows: 
$L(D^+) = \lrw{D^+}$, and $L(D^-) = \lrw{D^-}$.
In other words, we require that all the inconsistent words lead to a ``don't care'' state in $D$. 
       
  \begin{definition}[Strong-completeness]
      \label{def:s-completeness}
      A 3DFA $D$ over $\sevents$ is {\em strongly-complete} w.r.t. the timed language  $L$ if $D$ is complete w.r.t. $L$ and additionally, every region word that is strongly accepted or strongly rejected by $D$ must be consistent, \ie, if
           $\ltw{D^+} \subseteq L$ and $L(D^+) = \lrw{D^+}$; and
       $\ltw{D^-} \subseteq \overline{L}$ and $L(D^-) = \lrw{D^-}$.
  \end{definition}

\begin{corollary}
\label{cor:s-complete-for-region}
    Let $L$ be a timed language over $\Sigma$ recognized by a $K$-DERA $A$ over $\Sigma$, then for any 3DFA $D$ over $\sevents$,
    $D$ is strongly-complete w.r.t. $L$ iff $L(D^+) \subseteq \lrw{A} = \RW{L}$ and $L(D^-) \subseteq \lrw{\overline{A}} = \RW{\overline{L}}$.
    $D$ is strongly-complete w.r.t. $L$ iff $L(D^+) \subseteq \RW{L}$ and $L(D^-) \subseteq \RW{\overline{L}}$.
\end{corollary}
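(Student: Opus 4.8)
The plan is to derive this corollary almost directly from Lemma~\ref{lem:complete-for-region}, by unpacking the single extra condition that strong-completeness adds on top of completeness. The crucial preliminary observation is that for any 3DFA $D$ we always have $\lrw{D^+} = L(D^+) \cap \regL$ (by the definition of the semantic region language, i.e. the consistent region words accepted by $D^+$), and likewise $\lrw{D^-} = L(D^-) \cap \regL$. Consequently, the defining equalities of strong-completeness, namely $L(D^+) = \lrw{D^+}$ and $L(D^-) = \lrw{D^-}$, are equivalent to saying $L(D^+) \subseteq \regL$ and $L(D^-) \subseteq \regL$, that is, every strongly-accepted and every strongly-rejected region word is consistent. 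I would isolate this equivalence as the first step, since it is what converts the ``consistency'' clause of Definition~\ref{def:s-completeness} into a set-inclusion statement that combines cleanly with Lemma~\ref{lem:complete-for-region}.

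For the forward direction, I would assume $D$ is strongly-complete. Then $D$ is in particular complete, so Lemma~\ref{lem:complete-for-region} yields $\lrw{D^+} \subseteq \lrw{A} = \RW{L}$ and $\lrw{D^-} \subseteq \lrw{\overline{A}} = \RW{\overline{L}}$. Strong-completeness additionally gives $L(D^+) = \lrw{D^+}$ and $L(D^-) = \lrw{D^-}$, and substituting these two equalities into the inclusions above immediately delivers $L(D^+) \subseteq \RW{L}$ and $L(D^-) \subseteq \RW{\overline{L}}$, as required. This direction is essentially a one-line substitution.

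For the backward direction, I would assume $L(D^+) \subseteq \RW{L}$ and $L(D^-) \subseteq \RW{\overline{L}}$. Here I would exploit that, by Definition~\ref{def:region-lang}, $\RW{L} \subseteq \regL$ and $\RW{\overline{L}} \subseteq \regL$; hence $L(D^+) \subseteq \regL$ and $L(D^-) \subseteq \regL$, which by the first-step equivalence means precisely $L(D^+) = \lrw{D^+}$ and $L(D^-) = \lrw{D^-}$, i.e. the consistency clause of strong-completeness holds. Substituting these equalities back, the hypotheses become $\lrw{D^+} \subseteq \RW{L} = \lrw{A}$ and $\lrw{D^-} \subseteq \RW{\overline{L}} = \lrw{\overline{A}}$, so Lemma~\ref{lem:complete-for-region} (used in the reverse direction) gives that $D$ is complete. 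Complete plus the consistency clause is exactly strong-completeness, closing the argument.

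The only genuinely load-bearing point — and the step I would flag as the potential obstacle if anything is — is the very first reduction: recognizing that the two equalities in Definition~\ref{def:s-completeness} are interchangeable with the inclusions $L(D^{\pm}) \subseteq \regL$, together with the fact that $\RW{L}$ and $\RW{\overline{L}}$ are themselves already subsets of $\regL$. Everything else is a mechanical combination of Lemma~\ref{lem:complete-for-region} with these inclusions, and the second restated line of the statement is then simply the first line after rewriting $\lrw{A} = \RW{L}$ and $\lrw{\overline{A}} = \RW{\overline{L}}$ via Lemma~\ref{lem:srw-to-rw}.
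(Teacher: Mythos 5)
Your proposal is correct and follows exactly the argument the paper leaves implicit: the paper states this as a corollary of Definition~\ref{def:s-completeness} and Lemma~\ref{lem:complete-for-region} without spelling out a proof, and your reduction of the strong-completeness equalities $L(D^{\pm}) = \lrw{D^{\pm}}$ to the inclusions $L(D^{\pm}) \subseteq \regL$ (via $\lrw{D^{\pm}} = L(D^{\pm}) \cap \regL$), combined with $\RW{L}, \RW{\overline{L}} \subseteq \regL$, is precisely the intended derivation in both directions.
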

 
 A pictorial representation of completeness and strong-completeness are given in Figure~\ref{fig:comp-and-s-comp}.
 A dual notion of completeness, called \emph{soundness} can also be defined in this context, however, since we will only need that notion for showing the termination of \tlsep, we will introduce it in the next section.

\section{The \tlsep\ algorithm}
\label{sec:tlsep}
To learn a ERA-recognizable language $L$, we adopt the following greybox learning algorithm, that is an adaptation of the separation algorithm proposed in~\cite{CFCTW09}. The main steps of the algorithm are the following:
    we first learn a (strongly-)complete 3DFA $D_i$ that maps finite words on the alphabet $\Sigma \times {\sf Reg}(\Sigma,K)$ to $\{0,1,?\}$, where $1$ means ``accept'', $0$ means ``reject'', and $?$ means ``don't care''. The 3DFA $D_i$ is used to represent a family of regular languages $L$ such that $L \models D_i$.
    We then extract from $D_i$ a minimal automaton $C_i$ that is compatible with the 3DFA $D_i$, \emph{i.e.}, such that $L(C_i) \models D_i$.
    Finally, we check, using two inclusion queries, if $C_i$ satisfies the equation $\ltw{A} = \ltw{C_i}$. If this is the case then we return $C_i$, otherwise, we use the counterexample to this equality check to learn a new 3DFA $D_{i+1}$.
We now give details of the greybox in the following subsections, and we conclude this section with the properties of our algorithm. 

\subsection{Learning the 3DFA $D_i$}
\label{sec:table-to-3DFA}
In this step, we learn the 3DFA $D_i$, by relying on a modified version of $L^*$. Similar to $L^*$, we maintain an observation table $(S, E, T)$, where $S$ (the set of `prefixes') and $E$ (the set of `suffixes') are sets of region words. $T : (S.E) \to \{0,1,{?}\}$ is a function that maps the region words $s.e$, where $s \in S, e \in E$, to $1$ if $\sem{s.e} \subseteq L$, $0$ if $\sem{s.e} \cap L = \emptyset$ and to ${?}$ if $\sem{s.e} = \emptyset$.
Crucially, we restrict \emph{membership queries} exclusively to consistent region words. The rationale is that queries for inconsistent words are redundant, given our understanding that such words invariably yield a ``$?$'' response from the {\tt Teacher}. This approach streamlines the learning phase by reducing the number of necessary membership queries.
We can see the 3DFA $D_i$ defining a space of solutions for the automaton $C$ that must be discovered.

\begin{figure}[t!]
    \centering
    \includegraphics[width=\linewidth]{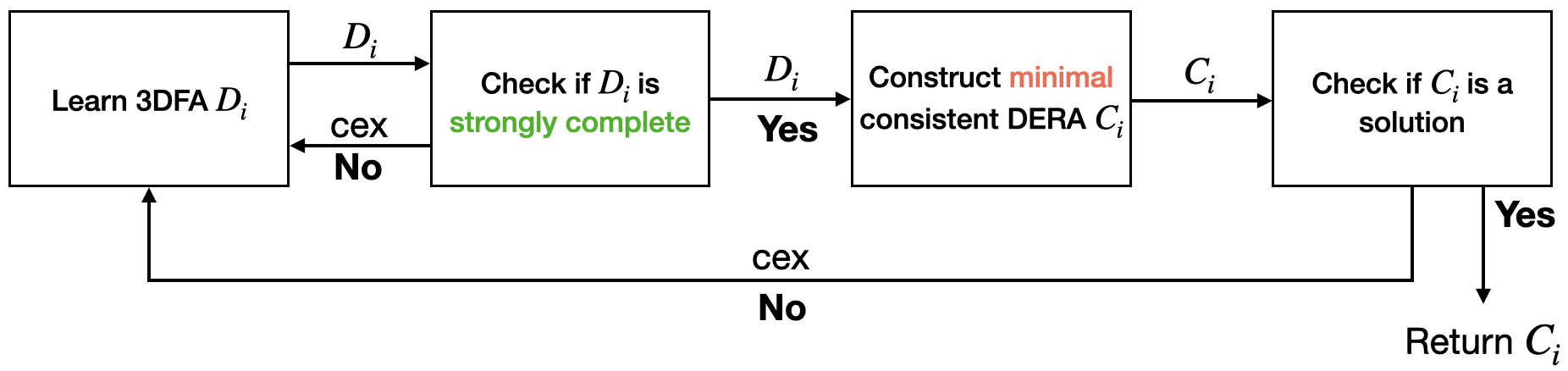}
    \caption{Overview of the \textsf{tLSep} algorithm}
    \label{fig:tlsep}
\end{figure}
\subsection{Checking if the 3DFA $D_i$ is (strongly-)complete}
\label{sec:comp-check}
We can interpret the 3DFA $D_i$ as the definition for a space of solutions for recognizing the region language $\RW{L}$.
In order to ensure that the final DERA of our algorithm has the minimum number of states, we rely on the fact that  at every iteration $i$, the 3DFA $D_i$ is \emph{strongly-complete}. 
  To this end, we show the following result:
  
\begin{lemma}
\label{lem:minimalility-3DFA}
    Let $L$ be a timed language, and $A = (Q,q_{\sf init},\Sigma,E,F)$ be a K-DERA with the \emph{minimum} number of states 
    accepting L.
    Let $C_i = (Q',q'_{\sf init},\sevents,\delta,F')$ be the minimal consistent DFA such that $L(C_i) \models D_i$. If $D_i$ is strongly-complete, then $|Q| \ge |Q'|$.
\end{lemma}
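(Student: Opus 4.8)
The plan is to bound $|Q'|$ from above by exhibiting a single DFA over $\sevents$ that has exactly $|Q|$ states and whose language is consistent with $D_i$. Since $C_i$ is by definition the \emph{minimal} DFA $C$ with $L(C) \models D_i$, producing any such DFA with $|Q|$ states immediately yields $|Q'| \le |Q|$, which is the desired inequality. The natural candidate is the DFA obtained directly from the minimum-state DERA $A$.

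Concretely, I would first invoke statement~4 of Lemma~\ref{lem:srw-to-rw} to obtain a DFA $A'$ over $\sevents$ with the same number of states as $A$ (so that $A'$ has $|Q|$ states) and with $L(A') = \lsrw{A}$. It then remains to prove that $L(A') \models D_i$, i.e.\ that every region word strongly accepted by $D_i$ lies in $\lsrw{A}$ and every region word strongly rejected by $D_i$ lies outside $\lsrw{A}$. Here I would use the hypothesis that $D_i$ is \emph{strongly}-complete, which by Corollary~\ref{cor:s-complete-for-region} gives $L(D_i^+) \subseteq \lrw{A} = \RW{L}$ and $L(D_i^-) \subseteq \lrw{\overline{A}} = \RW{\overline{L}}$, where $L(D_i^+)$ and $L(D_i^-)$ are exactly the strongly accepted and strongly rejected words.

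For the acceptance direction, if $D_i(w) = 1$ then $w \in L(D_i^+) \subseteq \lrw{A} \subseteq \lsrw{A} = L(A')$, using that $\lrw{A}$ is by definition the restriction of $\lsrw{A}$ to consistent words. For the rejection direction, if $D_i(w) = 0$ then $w \in L(D_i^-) \subseteq \lrw{\overline{A}} \subseteq \regL$, so $w$ is consistent with $\sem{w} \subseteq \overline{L}$; were $w$ also in $\lsrw{A}$, then $w \in \lsrw{A} \cap \regL = \lrw{A} = \RW{L}$ would force $\emptyset \neq \sem{w} \subseteq L \cap \overline{L} = \emptyset$, a contradiction, so $w \notin L(A')$. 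This establishes $L(A') \models D_i$, and minimality of $C_i$ then gives $|Q'| \le |Q|$.

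The main obstacle --- and the reason the statement requires \emph{strong}-completeness rather than mere completeness --- is precisely the handling of inconsistent region words. The DFA $A'$ satisfies $L(A') = \lsrw{A}$, and $\lsrw{A}$ in general contains many inconsistent region words; ordinary completeness (Lemma~\ref{lem:complete-for-region}) only constrains the \emph{consistent} words strongly classified by $D_i$, so $D_i$ could strongly accept an inconsistent word lying outside $\lsrw{A}$ (or strongly reject one lying inside $\lsrw{A}$), either of which would break $L(A') \models D_i$. Strong-completeness rules this out by guaranteeing $L(D_i^+) = \lrw{D_i^+}$ and $L(D_i^-) = \lrw{D_i^-}$, i.e.\ that every strongly classified word is consistent; this is exactly what makes the two verifications above go through and is the crux of the minimality argument.
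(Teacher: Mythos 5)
Your proposal is correct and follows essentially the same route as the paper's own proof: show that $\lsrw{A}$ is consistent with $D_i$ via strong-completeness and Corollary~\ref{cor:s-complete-for-region}, then conclude $|Q'| \le |Q|$ from the minimality of $C_i$ together with statement~4 of Lemma~\ref{lem:srw-to-rw}. In fact you are slightly more explicit than the paper on two points it leaves implicit --- invoking Lemma~\ref{lem:srw-to-rw}(4) to realize $\lsrw{A}$ by a $|Q|$-state DFA, and replacing the paper's identity $\lsrw{\overline{A}} = \overline{\lsrw{A}}$ with a direct semantic contradiction ($\emptyset \neq \sem{w} \subseteq L \cap \overline{L}$) in the rejection direction --- but the argument is the same.
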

\begin{proof}
We shall show that $\lsrw{A}$ is consistent with $D_i$.
First, if $\word \in L(D_i^+)$, then $\word \in \lrw{D_i^+}$ (since strongly-complete), therefore from Corollary~\ref{cor:s-complete-for-region}, $\word \in \lrw{A}$ and hence $\word \in \lsrw{A}$.
For the other side, if $\word \in L(D_i^-)$, then $\word \in \lrw{D_i^-}$, and again from Corollary~\ref{cor:s-complete-for-region}, $\word \in \lrw{\overline{A}}$, which is equivalent to $\word \in \lsrw{\overline{A}} \cap \regL$. Now note that $\lsrw{\overline{A}} = \overline{\lsrw{A}}$, and hence $\word \in \overline{\lsrw{A}} \cap \regL$, and so, $\word \in \overline{\lsrw{A}}$.
Therefore, $\lsrw{A}$ is consistent with $D_i$.
From hypothesis, $C_i$ is the \emph{minimal} DFA that is consistent with $D_i$. Therefore, it follows that $|Q| \ge |Q'|$.
\end{proof}

The lemma above implies that if we compute a minimal consistent DFA $C_i$ from a strongly-complete $3$DFA $D_i$ w.r.t. the timed language $L$ and if this $C_i$ satisfies Equation~\ref{eq:intersection}, then $C_i$ recognizes (when it is interpreted as a DERA) $L$ and it has the minimum number of states among every DERA recognizing $L$. The two requirements -- $D_i$ being \emph{strongly}-complete and $C_i$ being a \emph{minimal} consistent DFA -- are necessary to ensure that the resulting DERA for $L$ has the minimum number of states. However, checking both of these requirements are computationally expensive. First, for \emph{strong}-completeness, one needs to check that no inconsistent region word is strongly accepted or strongly rejected by the 3DFA $D_i$. This can be done by checking if every region word that is either strongly accepted or strongly rejected by $D_i$ is accepted by a DFA recognizing the regular language $\overline{\regL}$ -- this is computationally hard due to the size of the latter.
Second, computing a minimal consistent DFA $C_i$ from $D_i$ requires essentially to find a DFA that accepts every region word that is strongly accepted by $D_i$ and rejects every region word that is strongly rejected by $D_i$. This is a generalization of the problem of finding a DFA with minimum number of states accepting a set of positive words and rejecting a set of negative words, which is known -- due to the works of Gold~\cite{DBLP:journals/iandc/Gold78} -- to be a computationally hard problem. 
In practice, we generally do not require (as is also remarked in~\cite{CFCTW09}) the automaton with the \emph{minimum} number of states, instead it is sufficient to compute an automaton with a \emph{small} number of states -- provided it makes the algorithm computationally easier.
To this end, instead of implementing the algorithm depicted in Figure~\ref{fig:tlsep}, we instead implement the algorithm depicted in Figure~\ref{fig:tlsep-heu} that employs two relaxations -- (i) we only compute a \emph{complete} $3$DFA $D_i$, and (ii) we only compute a \emph{small} (not necessarily minimal) $C_i$ consistent with $D_i$. Whereas checking whether a $3$DFA is complete or not can be done by checking the two inclusions mentioned in Definition~\ref{def:completeness}, we present a heuristic for computing a small consistent DFA in the next section. These two relaxations come at the cost of minimality of the solution, however, we show in Theorem~\ref{th:tlsep-correctness} that this  version of our algorithm still correctly computes a DERA recognizing $L$. As we will see in Section~\ref{section:implementation}, our implementation returns an automaton with a reasonably small, sometimes minimum, number of states.

\begin{figure}[t!]
    \centering
    \includegraphics[width=\linewidth]{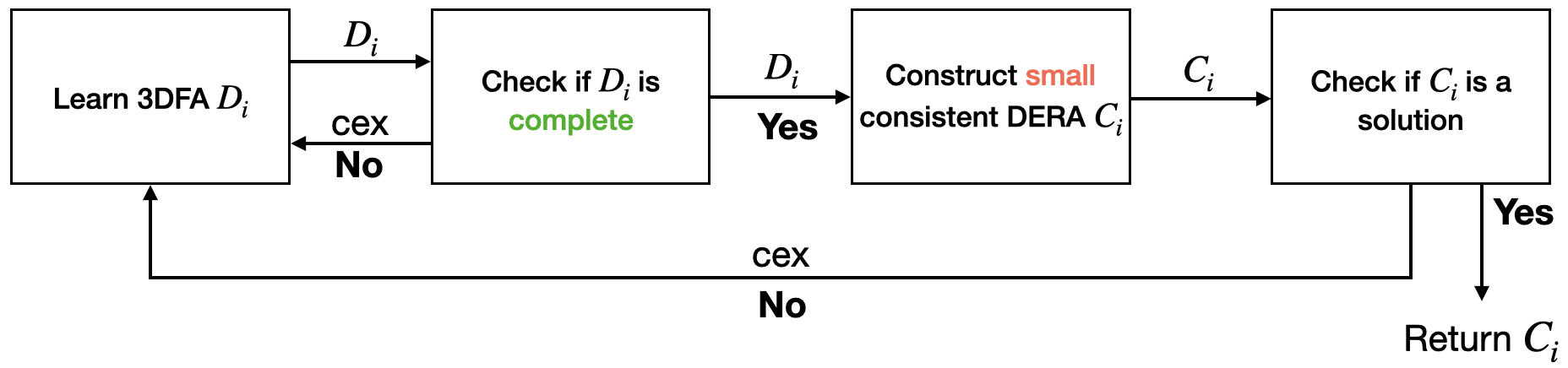}
    \caption{Overview of the \textsf{tLSep} algorithm with heuristics}
    \label{fig:tlsep-heu}
\end{figure}

\subsection{Computing a small candidate $C_i$ from $D_i$}

In this section, we detail a heuristic that computes a small candidate DFA $C_i$ from a $3$DFA $D_i$ that is complete with respect to $L$.
For the rest of this section, we fix the notation for the $3$DFA $D_i = (Q_i,q_{\sf init},(\sevents),\delta_i,{\sf A_i},{\sf R_i},{\sf E_i})$.
\medskip

    \noindent{\bf The first step} is to find the set of all pairs of states that are \emph{incompatible} in $D_i$.
  Two states are \emph{incompatible} if there exists a region word $\word \in (\sevents)^*$ s.t. $\word$ leads one of them to an accepting state in ${\sf A_i}$ and another to a rejecting state in ${\sf R_i}$. 
  Notice that, in the above definition, one does not take into account the timing constraints of a word, and that is precisely the reason why in the final automaton, the \emph{inconsistent} words may lead to any state, unlike in a ``simple DERA'' for the corresponding timed language, where the inconsistent words must be rejected syntactically by the automaton,  
  enabling it to be (possibly) smaller than the ``minimal simple DERA''. 

  We recursively compute the set $S_{\sf bad}$ of incompatible pairs of states as follows: initially add every pair $(q,q')$ into $S_{\sf bad}$ where $q\in {\sf A_i}$, $q' \in {\sf R_i}$; then recursively add new pairs as long as possible: add a new pair $(q_1,q_2)$ into $S_{\sf bad}$ s.t. there exists a letter $e \in \sevents$ s.t. $q_1 \xrightarrow{e} q_1'$ and $q_2 \xrightarrow{e} q_2'$ and $(q_1', q_2') \in S_{\sf bad}$.
\medskip

\noindent{\bf The second step} is to find the set of all $\subseteq$-\emph{maximal sets} that are compatible in $D_i$.
 A set of states of $D_i$ is \emph{compatible} if it does not contain any incompatible pair. 
  The set of maximal sets of compatible pairs, denoted by $S_{\sf good}^{\sf max}$, can be computed recursively as follows:
    initially, $S_{\sf good}^{\sf max} = \{Q_i\}$; at any iteration, if there exists a set $T\in S_{\sf good}^{\sf max}$ s.t. there exists a pair $(q,q') \in T \cap S_{\sf bad}$, we do the following operation:
    $S_{\sf good}^{\sf max} := (S_{\sf good}^{\sf max} \setminus T) \cup (T\setminus \{q\}) \cup (T\setminus \{q'\})$,
    with the condition that  $T \setminus \{q\}$ is added to the set $S_{\sf good}^{\sf max}$ only if there exists no $T'$ in $S_{\sf good}^{\sf max}$ such that $T' \supseteq T\setminus \{q\}$, and similarly for $T \setminus \{q'\}$. Then
    we can prove the correctness of the above procedure in the following manner.

\begin{theorem}
\label{th:heuristic-correctness}
    The above procedure terminates. Moreover, the set $S_{\sf good}^{\sf max}$ returned by the algorithm is the set of all maximal compatible sets in the 3DFA $D_i$.
\end{theorem}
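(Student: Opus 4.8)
The plan is to prove the two assertions in turn, writing $\mathcal{M}$ for the collection of all $\subseteq$-maximal compatible subsets of $Q_i$ and regarding $S_{\sf good}^{\sf max}$ as the value evolving over the iterations. For \emph{termination}, I would first note that the loop only ever selects a set $T$ that is incompatible (it contains a pair from $S_{\sf bad}$), so $|T|\ge 2$, and replaces it by at most two sets, each of size $|T|-1$. The tempting measure ``total number of $S_{\sf bad}$-pairs contained in the collection'' does \emph{not} work, because splitting on $(q,q')$ duplicates every incompatible pair of $T$ that avoids both $q$ and $q'$. Instead I would use the potential $\Phi(S_{\sf good}^{\sf max})=\sum_{T\in S_{\sf good}^{\sf max}} 3^{|T|}$: processing a set of size $k\ge 2$ deletes the summand $3^{k}$ and inserts at most two summands $3^{k-1}$, so $\Phi$ decreases by at least $3^{k}-2\cdot 3^{k-1}=3^{k-1}\ge 1$. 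As $\Phi$ is a positive integer, only finitely many iterations occur. (Equivalently, the multiset of set-sizes strictly decreases in the Dershowitz--Manna ordering.)

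The correctness will rest on two invariants preserved by every iteration: \textbf{(a) antichain} --- no set in $S_{\sf good}^{\sf max}$ strictly contains another; and \textbf{(b) covering} --- every $M\in\mathcal{M}$ satisfies $M\subseteq T$ for some $T\in S_{\sf good}^{\sf max}$. Both hold initially, since $S_{\sf good}^{\sf max}=\{Q_i\}$. When $T$ is split into $T\setminus\{q\}$ and $T\setminus\{q'\}$, these two candidates are incomparable (each retains the element the other drops); the explicit subsumption test forbids inserting a candidate contained in a surviving set; and a surviving set $T'$ strictly contained in, say, $T\setminus\{q\}$ would already satisfy $T'\subsetneq T$, contradicting the antichain property before the step. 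So no new comparability appears, and \textbf{(a)} persists. For \textbf{(b)}, fix $M\in\mathcal{M}$ with witness $T_0\supseteq M$; if $T_0\ne T$ it survives, and if $T_0=T$ then, since $M$ is compatible, it omits $q$ or $q'$, hence $M\subseteq T\setminus\{q\}$ or $M\subseteq T\setminus\{q'\}$ --- and the relevant subset is either inserted (covering $M$) or rejected only because some surviving $T'$ already contains it, whence $T'\supseteq M$; either way $M$ stays covered.

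From these invariants correctness follows at the fixed point. The loop exits exactly when no set in $S_{\sf good}^{\sf max}$ contains an $S_{\sf bad}$-pair, i.e.\ every set in $S_{\sf good}^{\sf max}$ is compatible. By \textbf{(b)}, each $M\in\mathcal{M}$ lies inside some compatible $T\in S_{\sf good}^{\sf max}$, and maximality of $M$ forces $M=T$, giving $\mathcal{M}\subseteq S_{\sf good}^{\sf max}$. Conversely, any $T\in S_{\sf good}^{\sf max}$ is compatible, hence contained in some $M\in\mathcal{M}$, which we just showed lies in $S_{\sf good}^{\sf max}$; were $T\subsetneq M$ this would break \textbf{(a)}, so $T=M$ is maximal. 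Therefore $S_{\sf good}^{\sf max}=\mathcal{M}$.

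The main obstacle, I expect, is the correctness direction rather than termination: one must argue that the subsumption pruning never silently deletes a maximal compatible set. The delicate point is the case where a freshly produced $T\setminus\{q\}$ is discarded --- one has to certify that this happens only in the presence of a \emph{retained} superset, so that \textbf{(b)} survives, while simultaneously \textbf{(a)} is never violated, since the ``no non-maximal set remains'' half of correctness rests entirely on the antichain invariant. Care is also needed to fix the reading of the subsumption test as being evaluated against the sets \emph{other} than the one being split; otherwise $T$ itself would subsume both candidates and block all insertions.
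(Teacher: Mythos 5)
Your proof is correct, and its correctness half follows essentially the same route as the paper's: your invariants \textbf{(a)} and \textbf{(b)} are precisely the two halves of the paper's key lemma (its ``maximality'' statement is exactly your antichain invariant, and its third statement is your covering invariant, stated for all compatible sets rather than just maximal ones), both are established by induction over the iterations with the same case split ($T_0\neq T$ versus $T_0=T$, using that a compatible set cannot contain both $q$ and $q'$), and the final fixed-point assembly you spell out is the one the paper leaves implicit in ``the proof of correctness follows from the following lemma.'' Where you genuinely differ is termination: the paper argues that every inserted set is strictly smaller than the set it replaces, asserts that consequently no subset of $Q_i$ is ever added twice, and invokes the $2^{|Q_i|}$ bound; you instead use the strictly decreasing integer potential $\Phi=\sum_{T\in S_{\sf good}^{\sf max}}3^{|T|}$. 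Your argument is self-contained and arguably tighter, since the paper's ``no set is added twice'' claim is not immediate from the size observation alone (re-adding a set would require a strict superset of it to reappear in the collection, which is ruled out only via the antichain invariant or a monotonicity property of the collection), whereas your potential decreases at every step by pure arithmetic. Finally, your closing remark --- that the subsumption test must be read against the sets \emph{other} than the split set $T$, since otherwise $T$ itself would block every insertion --- resolves a real ambiguity in the paper's description of the procedure, and is the reading its own proof tacitly adopts.
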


\begin{proof}
 {\bf Termination:}
    The number of distinct subsets of $Q_i$ is bounded by $2^{|Q_i|}$, where $|Q_i|$ denotes the size of $Q_i$. Notice that, at every iteration, a set $T$ from $S_{\sf good}^{\sf max}$ is chosen and
    (at most two) sets with `strictly' smaller size gets added to $S_{\sf good}^{\sf max}$. Therefore, it is not possible for the same set to be added twice along the procedure. In the worst case, it can contain all the distinct subsets of $Q_i$. Since the number of such sets is bounded and no set is added twice, the procedure terminates.
    
 {\bf Correctness:}
 Let us denote by $S_0$ the set $\{Q_i\}$, and let $S_1, S_2, \ldots$ denote the sets obtained after each iteration of the algorithm. 
 Let us denote by $S_m$ the set returned by the algorithm, \emph{i.e.}, $S_m = S_{m+1}$. The proof of correctness follows from the following lemma:
 \begin{lemma}
 \label{lem:all-are-compat}
 \label{compat-in-good}
 All sets in $S_m$ are compatible, and are maximal w.r.t the subset ordering, \ie, they form a $\subseteq$-antichain. 
Furthermore, for every $S \subseteq Q_i$ that is compatible, there is some $S' \in S_m$ such that $S \subseteq S'$. 
    \end{lemma}
     \begin{proof}
   \emph{First}, if $S_m$ contains a set $T$ which is not compatible, then $T$ would be deleted at the $m+1$-th iteration of the algorithm, which contradicts the fact that $S_m= S_{m+1}$.
   Therefore, all sets in $S_m$ are compatible. 
    
   {\em The second statement} can be shown by induction on $0 \le j \le m$. 
   
   \noindent{\bf Base case:} $S_0 = \{Q_i\}$ is maximal. 
   
   \noindent{\bf Induction hypothesis:} Suppose all sets in $S_{j-1}$ is maximal. 
   
   \noindent{\bf Induction step:} Suppose we delete the set $T$ at iteration $j$, and let $(q,q')$ be the incompatible pair in $T$. Then $T\setminus \{q\} \in S_j \setminus S_{j-1}$ iff there is no  $T'\in S_{j-1}$ s.t. $T' \supseteq T\setminus \{q\}$ 
 iff $T\setminus \{q\}$ is maximal (same argument for $T \setminus \{q'\})$). Since these are the only two sets that could possibly be added to $S_j$, an since by induction hypothesis, all sets in $S_{j-1}$ are maximal, the statement follows also for $j$.
  
{\em The third statement} can also be proved by induction on $0 \le j \le m$.

\noindent{\bf Base case:} Since $S_0 = \{Q_i\}$, and since $S \subseteq Q_i$, the statement trivially holds.

\noindent{\bf Induction hypothesis:} Suppose the statement holds for $j-1$.

\noindent{\bf Induction step:} Let $S \subseteq Q_i$ be a compatible set. Let $T$ be the set selected in iteration $j$ with the incompatible pair being $(q,q')$. From induction hypothesis,  there exists a set $S' \in S_{j-1}$ such that $S \subseteq S'$. 

{\tt Case 1.} 
$S' \neq T$. In this case, $S' \in S_j$ as well, and hence the result follows.

{\tt Case 2.} $S'$ is $T$. Now, note that, in place of $T$, the set $S_j$ contains two sets $T_1, T_2$ (both are compatible, due to Lemma~\ref{lem:all-are-compat}) where $T_1 \supseteq (T \setminus \{q\})$ and $T_2 \supseteq (T \setminus \{q'\})$, and $(q, q')$ is an incompatible pair. Now, since $S$ is consistent, $S$ (and also, $T_1$, $T_2$) cannot contain both the states $q, q'$. Therefore, $S \subseteq T_1$ if $q \notin S$, and $S \subseteq T_2$ if $q' \notin S$.
\end{proof}\end{proof}
    \noindent{\bf In the final step}, we define a DFA $C_i = (Q_i',q'_{\sf init},(\sevents),\delta_i',F_i)$, where:
        $Q_i' = S_{\sf good}^{\sf max}$;
        $q'_{\sf init} = T \in S_{\sf good}^{\sf max}$ s.t. $q_{\sf init} \in T$, and $T$ has the highest cardinality;
        for any $T \in Q_i'$, and $e \in (\sevents)$, define $\delta_i'(T, e) = T'\in S_{\sf good}^{\sf max}$ s.t. $\bigcup_{q\in T} \{\delta_i(q,e)\} \subseteq T'$, and has the largest cardinality; and
        $F_i := \{T \in S_{\sf good}^{\sf max} \mid T \cap {\sf A_i} \neq \emptyset\}$.
        
    Note that, the DFA $C_i$ constructed from the 3DFA $D_i$ is not unique in the sense that (1) there might be more than one possible $T$ in $S_{\sf good}^{\sf max}$ s.t. $q_{\sf init} \in T$, and (2) there might be more than one possible $T'$ in $S_{\sf good}^{\sf max}$ such that $T' \supseteq \bigcup_{q\in T} \{\delta_i(q,e)\}$.
    Also, it is important here to note that, one can apply further heuristics at this step, in particular, while adding a new transition from $T$ on $e$, if there is already some maximal set $T'$ that has been discovered previously, set the target of this transition to $T'$, even if it is not of maximum cardinality.
\begin{lemma}
\label{C-consistent-with-D}
    For any 3DFA $D_i$, the procedure described above is well-defined. Then for any $C_i$ constructed according to the procedure, $C_i$ is consistent with $D_i$, \ie, $L(D^+_i) \subseteq L(C_i)$ and $L(D^-_i) \subseteq \overline{L(C_i)}$.
\end{lemma}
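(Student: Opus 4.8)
The claim has two parts: first, that the construction is *well-defined* (there is always a valid choice of initial state and of transition target), and second, that the resulting DFA $C_i$ is *consistent* with $D_i$, meaning $L(D_i^+) \subseteq L(C_i)$ and $L(D_i^-) \subseteq \overline{L(C_i)}$.

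Let me think about well-definedness first. The initial state requires a $T \in S_{\sf good}^{\sf max}$ with $q_{\sf init} \in T$. The singleton $\{q_{\sf init}\}$ is trivially compatible (a single state has no incompatible pair), so by the third statement of Lemma all-are-compat, there exists $T' \in S_{\sf good}^{\sf max}$ with $\{q_{\sf init}\} \subseteq T'$. Good. For transitions: I need that $\bigcup_{q \in T}\{\delta_i(q,e)\}$ is compatible, so that it's contained in some maximal compatible set. This is the key point. Suppose $T$ is compatible and I apply the letter $e$. If the image set contained an incompatible pair $(\delta_i(q_1,e), \delta_i(q_2,e))$, then by the backward-closure definition of $S_{\sf bad}$ (the recursive rule: add $(q_1,q_2)$ when some $e$ sends them to a bad pair), the pair $(q_1,q_2)$ would itself be incompatible — contradicting compatibility of $T$. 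So the image is compatible, hence contained in some maximal set by the third statement of Lemma all-are-compat.

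For consistency, I want to relate runs of $C_i$ to runs of $D_i$. The natural claim, provable by induction on word length, is: for every word $\word$, the run of $C_i$ on $\word$ ends in a state $T_{\word} \in S_{\sf good}^{\sf max}$ that *contains* the state $D_i$ reaches on $\word$, i.e. $\hat\delta_i(q_{\sf init},\word) \in T_{\word}$. The base case holds because $q_{\sf init} \in q'_{\sf init}$. For the step, if $\hat\delta_i(q_{\sf init},\word) \in T_{\word}$ and we read $e$, then $\delta_i(\hat\delta_i(q_{\sf init},\word),e)$ lies in $\bigcup_{q\in T_{\word}}\{\delta_i(q,e)\}$, which is $\subseteq \delta_i'(T_{\word},e) = T_{\word e}$ by the transition definition. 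Now suppose $\word \in L(D_i^+)$: then $\hat\delta_i(q_{\sf init},\word) \in {\sf A_i}$, so $T_{\word} \cap {\sf A_i} \neq \emptyset$, hence $T_{\word} \in F_i$, so $\word \in L(C_i)$. Dually, if $\word \in L(D_i^-)$, then $\hat\delta_i(q_{\sf init},\word) \in {\sf R_i}$; I must show $T_{\word} \notin F_i$, i.e. $T_{\word} \cap {\sf A_i} = \emptyset$. This follows from compatibility: $T_{\word}$ is a compatible set containing a rejecting state, and if it also contained an accepting state, that accepting/rejecting pair would be in $S_{\sf bad}$ (by the base case of the $S_{\sf bad}$ recursion), contradicting compatibility.

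The main obstacle is making sure the invariant is stated with exactly the right containment direction — $C_i$'s state *contains* the corresponding $D_i$ state — and then being careful in the $L(D_i^-)$ case, where I need the compatibility of $T_{\word}$ to rule out an accepting state coexisting with the rejecting state. Everything hinges on the backward-closure property of $S_{\sf bad}$ (used for well-definedness of transitions) and on the base case of that closure, namely that every accepting/rejecting pair is incompatible (used for the rejection case). I would state the invariant as a short sub-lemma, prove it by induction, and then read off both inclusions as immediate corollaries.
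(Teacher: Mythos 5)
Your proof is correct and takes essentially the same route as the paper's: well-definedness via the observation that compatibility is preserved by the transition image (using the backward-closure of $S_{\sf bad}$) together with the maximality statement of the auxiliary lemma, and consistency via the inductive invariant that the run of $C_i$ on any word visits states containing the corresponding states of the run of $D_i$. You are in fact slightly more explicit than the paper on the $L(D_i^-)$ case, which the paper dispatches as ``a similar argument'' even though it is not literally symmetric --- as you correctly note, it requires the compatibility of the final set $T_{\word}$ to exclude an accepting state coexisting with the rejecting one.
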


\begin{proof}
    {\bf Well-definedness:}
    \emph{First}, notice that, from Lemma~\ref{compat-in-good}, there must exist a set $T \in S_{\sf good}^{\sf max}$ such that $\{q_{\sf init}\} \subseteq T$.
    \emph{Second}, let $T$ be any maximal set in $S_{\sf good}^{\sf max}$, and  $e \in (\sevents)$. Let $\delta_i'(T, e) = T''$. Notice that, since $T$ is compatible, therefore $T''$ has to be a compatible set. 
    Indeed, if there exists a pair of states $(q, q')$ in $T''$ that are incompatible, then there must exist $q_1, q_1'$ in $T$ such that $q_1 \xrightarrow{e} q$ and $q_1' \xrightarrow{e} q'$ are two transitions in $E_i$, and hence $(q_1,q_1')$ are also incompatible contradicting the hypothesis that $T$ is compatible. 
    Now that we have shown that $T''$ must be a compatible set, again by Lemma~\ref{compat-in-good}, there must exist $T' \in S_{\sf good}^{\sf max}$ such that $T'' \subseteq T'$. This concludes the proof.

    {\bf Consistency:} Consider any $C_i$ constructed according to the procedure from the 3DFA $D_i$. 
    Suppose $\word = e_1 e_2\dots e_m$ is a region word over $(\Sigma, K)$ (not necessarily consistent).
    Let $\rho = q_{\sf init} = q_0 \xrightarrow{e_1} q_1 \xrightarrow{e_2} \ldots \xrightarrow{e_m} q_m$ be the unique run of $\word$ in the 3DFA $D_i$.
    By applying Lemma~\ref{compat-in-good} on the transitions of $\rho$, in particular by induction, one can show that there exists a unique run $\rho' = T_0 \xrightarrow{e_1} T_1 \xrightarrow{e_2} \ldots \xrightarrow{e_m} T_m$, where for all $0 \le j \le m$, $T_j$ is a state in $Q_i'$ s.t. $q_j \in T_j$.
    Now suppose, $\word \in L(D_i^+)$. Then, $q_m \in {\sf A_m} \cap T_m$. By definition of $F_i$ (the set of accepting states of $C_i$), $T_m \in F_i$, which implies that $\word \in L(C_i)$.
    A similar argument can be used to show that for any $\word \in L(D_i^-)$, $\word \in \overline{L(C_i)}$. This concludes the proof.
\end{proof}

Taking intersection with $\regL$ on both sides, together with Lemma~\ref{lem:complement-rw}, we get the following corollary:
\begin{corollary}
\label{cor:consistent}
    $\lrw{D_i^+} \subseteq \lrw{C_i}$, and
    $\lrw{D_i^-} \subseteq \overline{\lrw{C_i}} \cap \regL$.
\end{corollary}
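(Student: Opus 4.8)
The plan is to obtain both inclusions by simply intersecting the two inclusions of Lemma~\ref{C-consistent-with-D} with $\regL$, recalling that for any DFA $X$ over $\sevents$ the notation $\lrw{X}$ abbreviates $L(X) \cap \regL$, i.e.\ the set of \emph{consistent} region words accepted by $X$. So everything reduces to a set-theoretic manipulation sitting on top of the already-established consistency of $C_i$ with $D_i$.

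First I would treat the positive side. Lemma~\ref{C-consistent-with-D} gives $L(D_i^+) \subseteq L(C_i)$; intersecting both sides with $\regL$ yields $L(D_i^+) \cap \regL \subseteq L(C_i) \cap \regL$, which is exactly $\lrw{D_i^+} \subseteq \lrw{C_i}$ by definition, so the first statement needs nothing further. For the negative side, Lemma~\ref{C-consistent-with-D} gives $L(D_i^-) \subseteq \overline{L(C_i)}$, and intersecting with $\regL$ produces $\lrw{D_i^-} \subseteq \overline{L(C_i)} \cap \regL$. It then remains to rewrite the right-hand side: since $\overline{\lrw{C_i}} = \overline{L(C_i) \cap \regL} = \overline{L(C_i)} \cup \overline{\regL}$, intersecting with $\regL$ kills the term $\overline{\regL} \cap \regL$, leaving $\overline{\lrw{C_i}} \cap \regL = \overline{L(C_i)} \cap \regL$; this is precisely the complementation identity of Lemma~\ref{lem:complement-rw} read off for $C_i$. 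Substituting gives $\lrw{D_i^-} \subseteq \overline{\lrw{C_i}} \cap \regL$, completing the second statement.

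I do not expect a genuine obstacle here: the whole corollary is a one-line consequence of Lemma~\ref{C-consistent-with-D} via distributivity of intersection over inclusion. The only point that deserves a moment's care is the negative-side rewriting, where $\overline{L(C_i)}$ must be understood as the complement taken in $(\sevents)^*$. Because $C_i$ is \emph{total} (its transition function $\delta_i'$ is defined on every maximal compatible set for every letter of $\sevents$, using well-definedness from Lemma~\ref{C-consistent-with-D}), the complement DFA $\overline{C_i}$ satisfies $L(\overline{C_i}) = \overline{L(C_i)}$, which is what legitimizes reading the De Morgan step as the complementation statement of Lemma~\ref{lem:complement-rw} rather than as a bare set identity.
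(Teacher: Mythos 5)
Your proposal is correct and matches the paper's own derivation: the paper likewise obtains the corollary by intersecting both inclusions of Lemma~\ref{C-consistent-with-D} with $\regL$ and invoking the complementation identity of Lemma~\ref{lem:complement-rw} to rewrite $\overline{L(C_i)} \cap \regL$ as $\overline{\lrw{C_i}} \cap \regL$. Your explicit De Morgan computation and the remark on totality of $C_i$ just spell out what the paper leaves implicit.
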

\subsection{Checking if $C_i$ is a solution}
\label{sec:cex-processing}
Now that $C_i$ has been extracted from $D_i$, we can query the ${\tt Teacher}$ with two inclusion queries that check if 
$L^{\sf tw}(C_i) = L$. 
If the answer is yes, then $C_i$ is a solution to our the learning problem, and we return the automaton $C_i$; otherwise, the {\tt Teacher} returns a counterexample -- a consistent region word that is either accepted by $C_i$ but is not in $\RW{L}$, or a consistent region word that is rejected by $C_i$ but is in $\RW{L}$. 
Notice that the equality check has been defined above as an equality check between two timed languages, whereas, the counterexamples returned by the {\tt Teacher} are region words. This is not contradictory, indeed, one can show a similar result to Lemma~\ref{lem:rw-iff-tw} that $\ltw{C_i} = L$ iff $\lrw{C_i} = \RW{L}$.
More details on how the counter-example is extracted from the answer received from the {\tt Teacher} is described in Section~\ref{section:implementation}.
We will show in Lemma~\ref{lem:cex-vs-sound} that a counterexample to the equality check is also a `discrepancy' in the 3DFA $D_i$.
Therefore, this counterexample is used to update the observation table as in $L^*$  to produce the  
subsequent hypothesis $D_{i+1}$. 

\subsection{Correctness of the \tlsep\ algorithm}

  \begin{definition}[Soundness]
      \label{def:soundness}
      A 3DFA $D$ over $\sevents$ is {\em sound} w.r.t. the timed language $L$ if 
           the timed language of $D^+$, when interpreted as a DERA, contains $L$,
       \ie, $L \subseteq \ltw{D^+}$; and
       the timed language of $D^-$, when interpreted as a DERA, contains $\overline{L}$, \ie,
        $\overline{L} \subseteq\ltw{D^-}$.
  \end{definition}

Similar to Lemma~\ref{lem:complete-for-region}, one can show the following result:
\begin{lemma}
\label{lem:sound-for-region}
Let $L$ be a timed language over $\Sigma$ recognized by a $K$-DERA $A$ over $\Sigma$, then for any 3DFA $D$ over $\sevents$,
    $D$ is sound w.r.t. $L$ iff $\RW{L} = \lrw{A} \subseteq \lrw{D^+}$ and $\RW{\overline{L}} = \lrw{\overline{A}} \subseteq \lrw{D^-}$.
\end{lemma}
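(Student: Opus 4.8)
The plan is to mirror the proof of Lemma~\ref{lem:complete-for-region} almost verbatim, substituting the reverse inclusions that define soundness for the forward ones that define completeness. Since $A$ recognizes $L$, we have $\ltw{A}=L$, and since $A$ is a DERA its complement automaton $\overline{A}$ satisfies $\ltw{\overline{A}}=\overline{L}$. The two soundness conditions from Definition~\ref{def:soundness} are $L\subseteq\ltw{D^+}$ and $\overline{L}\subseteq\ltw{D^-}$, so I would first rewrite these as $\ltw{A}\subseteq\ltw{D^+}$ and $\ltw{\overline{A}}\subseteq\ltw{D^-}$ respectively.

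The key step is to transport each timed-language inclusion to the corresponding region-language inclusion. For the first, I would apply Lemma~\ref{lem:rw-iff-tw} to the pair of DERAs $A$ and $D^+$ (with $D^+$ interpreted as a DERA, as permitted by the Remark following the definition of $\lrw{C}$), obtaining $\ltw{A}\subseteq\ltw{D^+}$ iff $\lrw{A}\subseteq\lrw{D^+}$. Symmetrically, applying Lemma~\ref{lem:rw-iff-tw} to $\overline{A}$ and $D^-$ gives $\ltw{\overline{A}}\subseteq\ltw{D^-}$ iff $\lrw{\overline{A}}\subseteq\lrw{D^-}$. Chaining these equivalences yields that $D$ is sound w.r.t.\ $L$ iff $\lrw{A}\subseteq\lrw{D^+}$ and $\lrw{\overline{A}}\subseteq\lrw{D^-}$.

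Finally, I would invoke Lemma~\ref{lem:srw-to-rw} (its first two statements) to identify $\lrw{A}=\RW{L}$ and $\lrw{\overline{A}}=\RW{\overline{L}}$, which rewrites the region-language inclusions into the stated form $\RW{L}=\lrw{A}\subseteq\lrw{D^+}$ and $\RW{\overline{L}}=\lrw{\overline{A}}\subseteq\lrw{D^-}$, completing the proof.

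I do not expect any genuine obstacle here: the statement is the exact dual of Lemma~\ref{lem:complete-for-region}, and every ingredient (the DERA complement relation, Lemma~\ref{lem:rw-iff-tw}, and Lemma~\ref{lem:srw-to-rw}) has already been established. The only point requiring a little care is that Lemma~\ref{lem:rw-iff-tw} is stated for DERAs, so I must justify treating $D^+$ and $D^-$ as DERAs over $\sevents$; this is exactly the interpretation licensed by the Remark on page~\pageref{page:lrw-lsrw}, which lets us read a DFA over $\sevents$ as a $K$-DERA and associate $\lrw{D^+}$, $\lrw{D^-}$ to it. With that observation in place the argument is a routine symmetric rewriting of the completeness proof.
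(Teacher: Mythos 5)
Your proposal is correct and matches the paper's intent exactly: the paper gives no separate proof for this lemma, stating only that it can be shown ``similar to Lemma~\ref{lem:complete-for-region}'', and your argument is precisely that proof dualized --- rewrite $L \subseteq \ltw{D^+}$ and $\overline{L} \subseteq \ltw{D^-}$ as inclusions against $\ltw{A}$ and $\ltw{\overline{A}}$, transfer them to region languages via Lemma~\ref{lem:rw-iff-tw}, and identify $\lrw{A}=\RW{L}$ and $\lrw{\overline{A}}=\RW{\overline{L}}$ via Lemma~\ref{lem:srw-to-rw}. Your extra remark that $D^+$ and $D^-$ must be read as DERAs (licensed by the paper's Remark on interpreting DFAs over $\sevents$) is a point the paper leaves implicit, and handling it explicitly only strengthens the write-up.
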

The following lemma shows that a counter-example of the equality check in Section~\ref{sec:cex-processing} is also a counter-example to the soundness of $D_i$: 
\begin{lemma}
\label{lem:cex-vs-sound}
    If a counter-example is generated for the equality check against $C_i$, then the corresponding 3DFA $D_i$ is not sound w.r.t. $\RW{L}$.
\end{lemma}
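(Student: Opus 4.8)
The plan is to contrapose through the region-language characterization of soundness supplied by Lemma~\ref{lem:sound-for-region}: the 3DFA $D_i$ is sound w.r.t.\ $L$ exactly when $\RW{L} \subseteq \lrw{D_i^+}$ and $\RW{\overline{L}} \subseteq \lrw{D_i^-}$. So it suffices to argue that whichever of the two possible counterexamples the {\tt Teacher} returns against the equality check witnesses the failure of one of these two inclusions, which in turn pins the whole proof to Corollary~\ref{cor:consistent}.

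First I would record the dichotomy for consistent region words. A counterexample $\word$ is by definition consistent, i.e.\ $\word \in \regL$. If $\word \notin \RW{L}$ then $\sem{\word} \not\subseteq L$, and since $\word$ is consistent Lemma~\ref{lem:simple-region} forces $\sem{\word} \cap L = \emptyset$, so that $\emptyset \neq \sem{\word} \subseteq \overline{L}$ and therefore $\word \in \RW{\overline{L}}$. This is the only genuinely structural step; the remainder is bookkeeping against Corollary~\ref{cor:consistent}.

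Now I would split on the two forms of counterexample described in Section~\ref{sec:cex-processing}. In the first case $\word$ is accepted by $C_i$, i.e.\ $\word \in \lrw{C_i}$, but $\word \notin \RW{L}$; by the previous paragraph $\word \in \RW{\overline{L}}$. Corollary~\ref{cor:consistent} gives $\lrw{D_i^-} \subseteq \overline{\lrw{C_i}} \cap \regL$, so $\word \in \lrw{C_i}$ prevents $\word$ from lying in $\lrw{D_i^-}$. Thus $\word \in \RW{\overline{L}} \setminus \lrw{D_i^-}$, breaking the second soundness inclusion. In the second case $\word$ is rejected by $C_i$, i.e.\ $\word \notin \lrw{C_i}$, while $\word \in \RW{L}$; now Corollary~\ref{cor:consistent} gives $\lrw{D_i^+} \subseteq \lrw{C_i}$, so $\word \notin \lrw{C_i}$ forces $\word \notin \lrw{D_i^+}$, whence $\word \in \RW{L} \setminus \lrw{D_i^+}$, breaking the first soundness inclusion. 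In either case $D_i$ fails to be sound, as required.

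The argument is short, and the only place demanding care is matching each counterexample type to the correct soundness inclusion and to the correct half of Corollary~\ref{cor:consistent}; the ``hard part'', such as it is, is the observation that a consistent region word outside $\RW{L}$ must land in $\RW{\overline{L}}$, which relies essentially on the all-or-nothing behaviour of $\sem{\word}$ relative to $L$ from Lemma~\ref{lem:simple-region}.
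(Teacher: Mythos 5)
Your proof is correct and follows essentially the same route as the paper's: split on the two counterexample types, use Corollary~\ref{cor:consistent} to transfer membership/non-membership from $C_i$ to $D_i^+$ and $D_i^-$, and convert ``consistent but not in $\RW{L}$'' into membership in $\RW{\overline{L}}$ to contradict the soundness inclusions of Lemma~\ref{lem:sound-for-region}. The only cosmetic difference is that you justify the dichotomy step via Lemma~\ref{lem:simple-region} directly, where the paper cites Lemma~\ref{lem:timed-language-union-of-regions} (which rests on the same fact), and you make the appeal to Lemma~\ref{lem:sound-for-region} explicit where the paper leaves it implicit.
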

\begin{proof}
    Recall from Corollary~\ref{cor:consistent} that, 
    $L^{\sf rw}(D^+_i) \subseteq L^{\sf rw}(C_i)$, and 
    $L^{\sf rw}(D^-_i) \subseteq \overline{\lrw{C_i}}\cap \regL$. 
    Let $\word$ be a counter-example for the equality check, then:

\noindent (1) either there exists a consistent region word $\word \in \RW{L} \setminus L^{\sf rw}(C_i)$, which implies $\word \in \RW{L} \setminus L^{\sf rw}(D_i^+)$. Hence, $D_i$ is not sound w.r.t $\RW{L}$.

\noindent (2) or, there exists a consistent region word $\word \in  L^{\sf rw}(C_i) \setminus \RW{L}$. Since $\word$ is \emph{consistent}, using Lemma~\ref{lem:timed-language-union-of-regions}, we can deduce that $\word \in  \RW{\overline{L}} \setminus (\overline{\lrw{C_i}}\cap \regL)$, which implies $\word \in \RW{\overline{L}}\setminus L^{\sf rw}(D_i^-)$. Hence, $D_i$ is not sound w.r.t. $\lrw{A}$.
\end{proof}

We can prove the following using similar lines of arguments as in~\cite{CFCTW09}.
\begin{theorem}
\label{th:tlsep-correctness}
Let $L$ be a timed language over $\Sigma$ recognizable by a ERA given as input to the \tlsep\ algorithm.  Then,
\begin{enumerate}
    \item 
    the algorithm \tlsep\ terminates,
    \item 
    the algorithm \tlsep\ is correct,
    \ie, letting $C$ be the automaton returned by the algorithm,
    $L^{\sf tw}(C) = L$,
    \item  
    at every iteration of \tlsep, if strong-completeness is checked for 3DFA $D_i$, and if a minimization procedure is used to construct $C_i$, then $C$ has a minimal number of states.
\end{enumerate}
\end{theorem}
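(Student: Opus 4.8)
The plan is to prove the three claims largely independently, reusing the structural lemmas already established. For \textbf{termination (1)}, the key is to track a progress measure across iterations. The algorithm produces a sequence of 3DFAs $D_0, D_1, D_2, \ldots$ via the $L^*$-style observation table, and each iteration that does not terminate produces a counterexample. By Lemma~\ref{lem:cex-vs-sound}, every such counterexample witnesses that $D_i$ is not sound w.r.t.\ $\RW{L}$, i.e.\ it is a consistent region word lying in the symmetric difference between what $D_i$ strongly commits to and the true region language. The plan is to argue, following the $L^*$ convention, that processing this counterexample strictly increases the number of distinct states (equivalently, distinct Myhill--Nerode classes) discovered in the table, and that this count is bounded above by the number of states of a minimal DFA recognizing $\RW{L}$ (which is finite and fixed, since $\RW{L}$ is regular by Lemma~\ref{lem:regL-is-reg} together with Lemma~\ref{lem:srw-to-rw}). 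Since the state count is monotonically increasing and bounded, only finitely many counterexamples can be processed, so the algorithm halts.

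For \textbf{correctness (2)}, termination occurs exactly when both inclusion queries in Section~\ref{sec:cex-processing} are answered \texttt{Yes}, which by definition means $\ltw{C} \subseteq L$ and $L \subseteq \ltw{C}$, hence $\ltw{C} = L$. The only subtlety is that the \texttt{Teacher} returns region-word counterexamples while the stopping condition is phrased on timed languages; but as noted in Section~\ref{sec:cex-processing}, the analogue of Lemma~\ref{lem:rw-iff-tw} gives $\ltw{C} = L$ iff $\lrw{C} = \RW{L}$, so the region-level equality check is faithful. Thus when the algorithm returns $C$ we indeed have $\ltw{C} = L$.

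For \textbf{minimality (3)}, the plan is to combine Lemma~\ref{lem:minimalility-3DFA} with correctness. Suppose at every iteration we verify that $D_i$ is \emph{strongly}-complete and we extract $C_i$ as a \emph{minimal} consistent DFA (rather than the greedy heuristic). Upon termination at some $C = C_i$, correctness gives $\ltw{C} = L$, so $C$ is a DERA recognizing $L$. Let $A$ be a $K$-DERA with the \emph{minimum} number of states accepting $L$, with $|Q|$ states. Lemma~\ref{lem:minimalility-3DFA} applies directly: since $D_i$ is strongly-complete and $C_i$ is the minimal consistent DFA with $L(C_i) \models D_i$, we obtain $|Q| \ge |Q'|$ where $|Q'|$ is the number of states of $C_i = C$. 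Combined with the fact that $C$ recognizes $L$ and so cannot have fewer states than the minimum, we conclude $|Q'| = |Q|$, i.e.\ $C$ has a minimal number of states.

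\textbf{The main obstacle} I expect is the termination argument (1), specifically making rigorous the claim that each counterexample forces strict progress. The delicate point is that the counterexample from Lemma~\ref{lem:cex-vs-sound} is a discrepancy against \emph{soundness}, whereas strong-completeness is separately maintained; one must check that feeding such a word into the $L^*$ table-closing/consistency-repair machinery genuinely adds an unreachable distinction and cannot loop without progress. This requires importing the counterexample-processing correctness of the underlying separation algorithm of~\cite{CFCTW09} and verifying that its invariants survive the adaptation to region words and the three-valued ($\{0,1,?\}$) setting, which is why the theorem is stated as provable ``using similar lines of arguments as in~\cite{CFCTW09}.''
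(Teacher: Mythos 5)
Your decomposition mirrors the paper's: part (2) is exactly the paper's argument (the algorithm only returns $C$ when the final equality check produces no counterexample), and part (3) is exactly Lemma~\ref{lem:minimalility-3DFA} combined with the minimality of the consistent DFA $C_i$, so both of those are fine. The gap is in your termination bound. You claim that the number of distinct rows (Myhill--Nerode classes) discovered in the observation table is bounded by the number of states of a minimal DFA recognizing $\RW{L}$. In the three-valued setting this is false: rows of the table are compared over the full range $\{0,1,?\}$, so two prefixes are also distinguished when one admits a consistent continuation where the other does not, i.e., the ``$?$''-entries themselves create distinctions. Take $L = \emptyset$ (trivially ERA-definable): the minimal DFA for $\RW{L} = \emptyset$ has a single state, yet the table entries are then $0$ on consistent words and $?$ on inconsistent ones, so the distinct rows are governed by the Myhill--Nerode classes of $\regL$, of which there are in general exponentially many (cf.\ Lemma~\ref{lem:regL-is-reg}). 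So the quantity you propose as a progress bound can be strictly exceeded, and the termination argument as stated fails.

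The repair is precisely what the paper does: it constructs a \emph{canonical 3DFA} $D$ assigning $1$ to words in $\RW{L}$, $0$ to consistent words outside $\RW{L}$, and $?$ to inconsistent words; this $D$ is sound and complete w.r.t.\ $L$, and it is a finite object because both $\RW{L}$ and $\regL$ are regular (Lemmas~\ref{lem:regL-is-reg} and~\ref{lem:srw-to-rw}). The correct bound on the number of distinct table rows is the number of Myhill--Nerode classes of this three-valued function (at most the product of the sizes of the minimal DFAs for $\RW{L}$ and $\regL$), and each counterexample --- which by Lemma~\ref{lem:cex-vs-sound} witnesses non-soundness of the current hypothesis $D_i$ --- forces a new distinction, so the hypotheses converge to $D$ in finitely many iterations. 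With this substitution of the bounding object, your monotone-progress argument goes through, and the remainder of your proposal coincides with the paper's proof.
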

\begin{proof}
    1. One can define the following `canonical' 3DFA $D$ over $\sevents$ from $L$ as follows:
    for every \emph{consistent} region word $w$ s.t. $w \in \RW{L}$, define $D(w) = 1$; 
    for every \emph{consistent} region word $w$ s.t. $w \notin \RW{L}$, define $D(w) = 0$; and
    for every \emph{inconsistent} region word $w$, define $D(w) = {?}$. 
    It is then easy to verify that $D$ is sound and complete w.r.t. $L$.
    Due to Lemma~\ref{lem:cex-vs-sound}, at iteration $i$, a counterexample is produced only if the 3DFA $D_i$ is not sound. 
    Notice also that, from Lemma~\ref{lem:regL-is-reg} and Lemma~\ref{lem:srw-to-rw}, the languages $\regL$, $\overline{\regL}$, $\RW{L}$ and $\overline{\RW{L}}$ are regular languages.
    Therefore, the 3DFAs constructed from the observation table at every iteration of \tlsep\ algorithm, as described in Section~\ref{sec:table-to-3DFA}, gradually converges to the canonical sound and complete 3DFA $D$ (as shown in the case of regular separability in~\cite{GJL10}).
    
    2. Since $C$ is the final automaton, it must be the case that when an equality check was asked to the {\sf teacher} with hypothesis $C$, no counter-example was generated, and hence $\ltw{C} = L$.

    3. This statement follows from Lemma~\ref{lem:minimalility-3DFA} together with the fact that $C_i$ is the minimal consistent DFA for $D_i$.
    
\end{proof}

\section{Implementation and its performance}
\label{section:implementation}

We have made a prototype implementation of \tlsep\ in {\sc Python}. Parts of our implementation are inspired from the implementation of $L^*$ in AALpy~\cite{aalpy}. For the experiments, we assume that the {\tt Teacher} has access to a DERA $A$ recognizing the target timed language $L$. 
Below, we describe how we implement the sub-procedures.
Note that, in our implementation, we check for completeness, and not for strong-completeness, of the 3DFA's. 

\subparagraph*{Emptiness of region words.} Since we do not perform a membership query for inconsistent region words, before every membership query, we first check if the region word is consistent or not (in agreement with our greybox learning framework described in Page~\pageref{paragraph:learning-protocol}). The consistency check is performed by encoding the region word as an SMT formula in Linear Real Arithmetic and then checking (using the SMT solver Z3~\cite{Z3}) if the formula is satisfiable or not. 
    Unsatisfiability of the formula implies the region word is inconsistent. 
    
    \subparagraph*{Inclusion of Timed languages.}
    We check inclusion between timed languages recognizable by ERA in two situations: (i) while making the 3DFA $D_i$ complete (cf. Definition~\ref{def:completeness}), and (ii) during checking if the constructed DERA $C_i$ recognizes $L$ (cf. Definition~\ref{def:soundness}). Both of these checks can be reduced to checking emptiness of appropriate automata. 
    We perform the emptiness checks using the reachability algorithm ({\sf covreach}) implemented inside the tool {\sf TChecker}~\cite{Tchecker}.

    \subparagraph*{Counterexample processing.} When one of the language inclusions during completeness checking returns {\sf False},  
    we obtain a `concrete' path from \textsf{TChecker} that acts as a certificate for non-emptiness. The path is of the form:
    $${\sf cex} := (q_0, v_0) \xrightarrow[\{x_{\sigma_1}\}]{\sigma_1,~g_1} (q_1, v_1) \xrightarrow[\{x_{\sigma_2}\}]{\sigma_2,~g_2} (q_2, v_2) \ldots (q_{n-1}, v_{n-1}) \xrightarrow[\{x_{\sigma_n}\}]{\sigma_n,~g_n} (q_n, v_n)$$
    Since the guards present in the automata $D_i^+, D_i^-$ are region words, every guard present in the product automaton is also a region word. 
    From ${\sf cex}$ we then construct the region word $(\sigma_1, g_1)(\sigma_2, g_2) \ldots (\sigma_n, g_n)$. We then use the algorithm proposed by Rivest-Schapire~\cite{IS14,RS93} to compute the `witnessing suffix' ${\sf ws}$ from ${\sf cex}$ and then add ${\sf ws}$ to the set of suffixes ($E$) of the observation table.
    On the other hand, when we receive a counterexample during the soundness check, we add all the prefixes of this counterexample to the set $S$ instead. Note that, the guards present on the transitions in our output are simple constraints, and not constraints.

\begin{figure}[t]
    \centering
    \begin{subfigure}{.4\textwidth}
    \begin{tikzpicture}[scale = 0.7]
        \everymath{\scriptstyle}
        \begin{scope}[every node/.style={circle, draw, inner sep=2pt,
            minimum size = 3mm, outer sep=3pt, thick}] 
            \node [double] (0) at (0,0) {$q_0$}; 
            \node [] (1) at (2.1,0) {$q_1$}; 
            \node [double] (2) at (5.2,0) {$q_2$}; 
        \end{scope}
        \begin{scope}[->, thick]
            \draw (-0.7,0) to (0); 
            \draw [rounded corners] (0) to (-0.3,1) to (0.3, 1) to (0);
            \draw (0) to (1);
            \draw [rounded corners] (1) to (1.8,1) to (2.4, 1) to (1);
            \draw [bend left] (1) to (2);
            \draw (1) to (2);
            \draw [bend left] (2) to (1);
        \end{scope}
        \node at (0,1.2) {$a,~x_a > 0$};
        \node at (1.1,0.2) {$a,~x_a = 0$};
        \node at (2,1.2) {$a,~0 < x_a < 1$};
        \node at (3.5,0.2) {$a,~x_a = 0$};
        \node at (3.5,0.8) {$a,~x_a \geq 1$};
        \node at (3.5,-0.8) {$a,~x_a \geq 0$};
    \end{tikzpicture}
    \caption{Example-4}
    \label{fig:example-5}
    \end{subfigure}
    \begin{subfigure}{.5\textwidth}
    \begin{tikzpicture}[scale=0.7]
    \everymath{\scriptstyle}
        \begin{scope}[every node/.style={circle, draw, inner sep=2pt,
            minimum size = 3mm, outer sep=3pt, thick}] 
            \node [double] (0) at (-1,0) {$q_0$}; 
            \node [] (1) at (1.8,0) {$q_1$}; 
            \node [double] (2) at (6,0) {$q_2$}; 
        \end{scope}
        \begin{scope}[->, thick]
            \draw (-1.7,0) to (0); 
            \draw [rounded corners] (0) to (-1.4,1) to (-0.6, 1) to (0);
            \draw (0) to (1);
            \draw [rounded corners] (1) to (1.5,1) to (2.1, 1) to (1);
            \draw [bend left] (1) to (2);
            \draw (1) to (2);
            \draw [bend left] (2) to (1);
            \draw [rounded corners] (2) to (6,-2) to (1.8,-2) to (1);
        \end{scope}
        \node at (-1,1.2) { $a,~x_0 > 0,~\{x_0\}$};
        \node at (0.4,0.2) {$a,~x_0 = 0$};
        \node at (0.5,-0.2) {$\{x_1\}$};
        \node at (2,1.5) {$a,~0 < x_0 < 1 \wedge 0 < x_1 < 1$};
        \node at (2,1.2) {$\{x_0, x_1\}$};
        \node at (4,0.2) {$a,~x_0 \geq 1\wedge x_1 \geq 1$};
        \node at (4,-0.2) {$\{x_0,x_1,x_2\}$};
        \node at (4.1,1.1) {$a,~x_0=0 \wedge x_1=0$};
        \node at (4.95,0.8) {$\{x_2\}$};
        \node at (4,-0.9) {$a,~x_0 = 0 \wedge x_1 = 0 \wedge x_2 = 0$};
        \node at (4,-1.2) {$\{x_0, x_1\}$};
        \node at (4,-1.8) {$a,~x_0 > 0 \wedge x_1 > 0 \wedge x_2 > 0$};
        \node at (4,-2.2) {$\{x_0, x_1, x_2\}$};
    \end{tikzpicture}
    \caption{DTA computed by \textsf{LeanTA} on Figure~\ref{fig:example-5}}
    \label{fig:example-5-learnta}
    \end{subfigure}
    \caption{Explainability of \tlsep}
    \label{fig:ex-5-both}
\end{figure}

We now compare the performance of our prototype implementation of \tlsep\ against the algorithm proposed by Waga in~\cite{Waga23} on a set of synthetic timed languages and on a benchmark taken from their tool \textsf{LearnTA}. 
We witness encouraging results, among those, we stress on the important ones below.

\subparagraph*{Minimality.} We have shown in Section~\ref{sec:tlsep} that if one ensures strong-completeness of the 3DFAs throughout the procedure, then the minimality in the number of states is assured. However, this is often the case even with only checking completeness. For instance, recall our running example from Figure~\ref{fig:dera}. For this language, our algorithm was able to find a DERA with only two states (depicted in Figure~\ref{fig:tlsep-running-output}) -- which is indeed the minimum number of states. 
On the other hand, \textsf{LearnTA} does not claim minimality in the computed automaton, and indeed for this language, they end up constructing the automaton in Figure~\ref{fig:learnta-running-output} instead.
Note that for the sake of clarity, we draw a single edge from $q_0$ to $q_1$ on $a$ in Figure~\ref{fig:tlsep-running-output}, however, our implementation constructs  several parallel edges (on $a$) between these two states, one for each constraint in ${\sf Reg}(\Sigma,K)$.
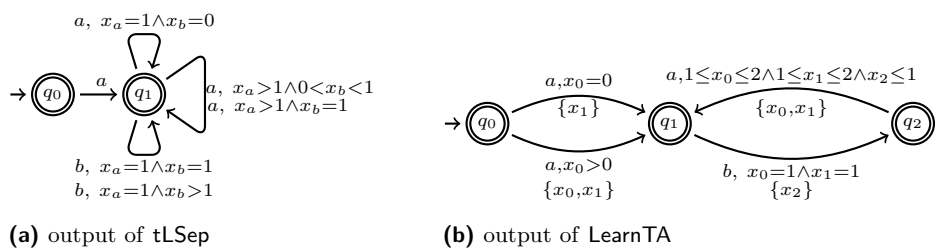
\begin{figure}[t]
    \centering
    \begin{subfigure}{.4\textwidth}
    \begin{tikzpicture}[scale=0.8]
    \everymath{\scriptstyle}
        \begin{scope}[every node/.style={circle, draw, inner sep=2pt,
            minimum size = 3mm, outer sep=3pt, thick}] 
            \node [double] (0) at (1.5,0) {$q_0$}; 
            \node [double] (1) at (3,0) {$q_1$};
        \end{scope}
        \begin{scope}[->, thick]
            \draw (0.8,0) to (0); 
            \draw (0) to (1);
            \draw [rounded corners] (1) to (2.7,1) to (3.3,1) to (1);
            \draw [rounded corners] (1) to (2.7,-1) to (3.3,-1) to (1);
            \draw [rounded corners] (1) to (4,0.7) to (4,-0.7) to (1);
        \end{scope}
        \node at (2.3,0.2) { $a$};    
        \node at (3,1.2) { $a,~x_a = 1\land x_b=0$};
        \node at (5.4,0.1) { $a,~x_a>1 \wedge 0 < x_b < 1$};
        \node at (5.2,-0.2) { $a,~x_a>1 \wedge x_b = 1$};
        \node at (3,-1.2) { $b,~x_a=1 \wedge x_b = 1$};
        \node at (3,-1.6) { $b,~x_a=1 \wedge x_b > 1$};
    \end{tikzpicture}
    \caption{output of \tlsep\ }
    \label{fig:tlsep-running-output}
    \end{subfigure}
    \begin{subfigure}{.55\textwidth}
        \begin{tikzpicture}[scale=0.8]
    \everymath{\scriptstyle}
        \begin{scope}[every node/.style={circle, draw, inner sep=2pt,
            minimum size = 3mm, outer sep=3pt, thick}] 
            \node [double] (0) at (0,0) {$q_0$}; 
            \node [double] (1) at (3,0) {$q_1$};
            \node [double] (2) at (7,0) {$q_2$};
        \end{scope}
        \begin{scope}[->, thick]
            \draw (-0.7,0) to (0); 
            \draw [bend right = 25] (0) to (1);
            \draw [bend left = 25] (0) to (1);
            \draw [bend right = 25](1) to (2);
            \draw [bend right = 25] (2) to (1);
        \end{scope}
        \node at (1.5,0.7) { $a,x_0=0$};
        \node at (1.5,0.25) { $\{x_1\}$};
        \node at (1.5,-0.7) { $a,x_0>0$};
        \node at (1.5,-1.1) { $\{x_0,x_1\}$};
        
        \node at (5,-0.8) { $b,~x_0 = 1\land x_1=1$};
        \node at (5,-1.1) { $\{x_2\}$};
        \node at (5,0.8) { $a,1\le x_0\le 2\land 1\le x_1\le 2\land x_2\le1$};
        \node at (5,0.25) { $\{x_0,x_1\}$};
    \end{tikzpicture}
    \caption{output of \textsf{LearnTA}}
    \label{fig:learnta-running-output}
    \end{subfigure}
    \caption{The automata computed by the two algorithms on Figure~\ref{fig:dera}}
\end{figure}

\subparagraph*{Explainability.} From a practical point of view, explainability of the output automaton is a very important property of learning algorithms. One would ideally like the learning algorithm to return a `readable', or easily explainable, model for the target language. 
We find some of the automata returned by \tlsep\ are relatively easier to understand than the model returned by {\sf LearnTA}. This is essentially due to the readability of ERA over TA. A comparative example can be found in Figure~\ref{fig:ex-5-both}.

\subparagraph*{Efficiency.} 
We treat completeness checks for a 3DFA and language-equivalence checks for a DERA, as equivalence queries ({\sf EQ}). Note that, {\sf EQ}'s are not new queries, these are merely (at most) two Inclusion Queries ({\sf IQ}).
We report on the number of queries for \tlsep\ and \textsf{LearnTA} in Table~\ref{tab:experiments} on a set of examples. 
For both the algorithms, we only report on the number of queries `with memory' in the sense that we fetch the answer to a query from the memory if it was already computed. 
Notice that \textsf{LearnTA} uses two types of membership queries, called \emph{symbolic queries} ({\sf s-MQ}) and \emph{timed-queries} ({\sf t-MQ}), and in most of the examples, \tlsep\ needs much lesser number of membership queries than \textsf{LearnTA}.

Although the class of  ERA-recognizable languages is a strict subclass of DTA-recognizable languages, we found a set of benchmarks from the \textsf{LearnTA} tool, that are in fact event-recording automata (after renaming the clocks). 
This example is called \emph{Unbalanced},  
Table~\ref{tab:experiments} shows that \tlsep\ uses significantly lesser number of queries than {\sf LearnTA} for these, illustrating the gains of our greybox approach.

\begin{figure}[t]
    \centering
    \begin{subfigure}{0.3\textwidth}
    \begin{tikzpicture}[scale=0.7]
    \everymath{\scriptstyle}
        \begin{scope}[every node/.style={circle, draw, inner sep=2pt,
            minimum size = 3mm, outer sep=3pt, thick}] 
            \node [double] (0) at (0,0) {$q_0$}; 
            \node [] (1) at (2,0) {$q_1$};
            \node [] (2) at (2,-2) {$q_2$};
            \node [] (3) at (0,-2) {$q_3$};
        \end{scope}
        \begin{scope}[->, thick]
            \draw (-0.7,0) to (0); 
            \draw (0) to (1);
            \draw (1) to (2);
            \draw (2) to (3);
            \draw (3) to (0);
        \end{scope}
        \node at (1,0.2) { $a$};
        \node at (2.2,-1) { $b$};
        \node at (1,-1.8) { $a,~x_a < 1$};
        \node at (-0.8,-1) { $b,~x_a > 1$};
    \end{tikzpicture}
    \caption{Example-1}
    \label{fig:example-1}
    \end{subfigure}
\begin{subfigure}{0.2\textwidth}
    \begin{tikzpicture}[scale=0.7]
    \everymath{\scriptstyle}
        \begin{scope}[every node/.style={circle, draw, inner sep=2pt,
            minimum size = 3mm, outer sep=3pt, thick}] 
            \node [double] (0) at (0,0) {$q_0$}; 
            \node [double] (1) at (2,0) {$q_1$};
        \end{scope}
        \begin{scope}[->, thick]
            \draw (-0.7,0) to (0); 
            \draw [bend left] (0) to (1);
            \draw [bend left] (1) to (0);
        \end{scope}
        \node at (1,0.6) { $a,~x_a = 1$};
        \node at (1,-0.6) { $b$};
    \end{tikzpicture}
    \caption{Example-2}
    \label{fig:example-2}
\end{subfigure}
\begin{subfigure}{0.4\textwidth}
    \begin{tikzpicture}[scale = 0.7]
    \everymath{\scriptstyle}
        \begin{scope}[every node/.style={circle, draw, inner sep=2pt,
            minimum size = 3mm, outer sep=3pt, thick}] 
            \node [] (0) at (0,0) {$q_0$}; 
            \node [] (1) at (2,0) {$q_1$};
            \node [double] (2) at (4.2,0) {$q_2$};
        \end{scope}
        \begin{scope}[->, thick]
            \draw (-0.7,0) to (0); 
            \draw [] (0) to (1);
            \draw [rounded corners] (1) to (1.7,1) to (2.3,1) to (1);
            \draw [] (1) to (2);
        \end{scope}
        \node at (1,0.2) { $a$};
        \node at (2,1.2) { $b,~x_a = 1$};
        \node at (3.1,0.2) { $b,~x_a > 1$};
    \end{tikzpicture}
    \caption{Example-3}
    \label{fig:example-4}
\end{subfigure}
\caption{Automata corresponding to the models ex1, ex2 and ex3 of the table, respectively}
\label{fig:example123}
\end{figure}
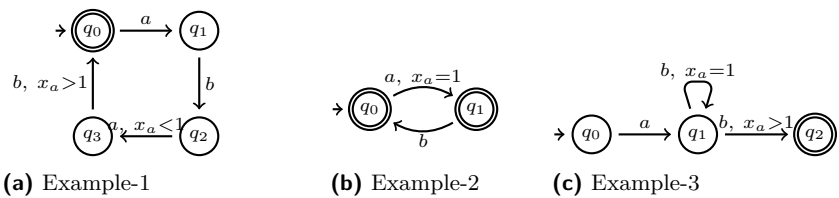

Below, we give brief descriptions of languages that we have learnt using our \tlsep\ algorithm. 
The timed language (ex1) has untimed language $(abab)^*$, where in every such four letter block, the second $a$ must occur before $1$ time unit of the first $a$ and the second $b$ must occur after $1$ time unit since the first $a$.
The language (ex2) has untimed language $(ab)^*$, where the first $a$ happens at $1$ time unit and every subsequent $a$ occurs exactly $1$ time unit after the preceding $a$. 
The timed language (ex3) 
 consists of timed words whose untimed language is $ab^*b$, and the first $a$ can occur at any time, then there can be several $b$'s all exactly $1$ time unit after $a$, and then the last $b$ occurs strictly after $1$ time unit since the first $a$. 
We also consider the language (ex4) represented by the automaton in Figure~\ref{fig:example-5}  (this language is taken from \cite{GJL10}), for whose language, as we have described, \tlsep\ constructs a relatively more understandable automaton compared to \textsf{LearnTA}.

\begin{table}[t]
    \centering
    \begin{tabular}{||c|c|c|c||c|c|c||c|c|c||}
\hline
\multirow{2}{*}{Model} & \multirow{2}{*}{$K$} & \multirow{2}{*}{$|Q|$} & \multirow{2}{*}{$|\Sigma|$} & \multicolumn{3}{c||}{\tlsep} &  \multicolumn{3}{c||}{\textsf{LearnTA}} \\
& & & & ${\sf MQ}$ & ${\sf IQ}$ &${\sf EQ}$ & ${\sf s-MQ}$& ${\sf t-MQ}$ & ${\sf EQ}$ \\
\hline
    \hline
    Figure~\ref{fig:dera} & 1 & 4 & 2 & 98 & 8 &5& 230&219& 4\\
    \hline
    ex1 & 1 & 5 & 2 & 219 & 11 & 6 & 296&365 &4  \\ 
    \hline 
    ex2 & 1 & 3 & 2 & 220 & 12 & 7& 247&233& 5\\ 
    \hline
    ex3 & 2 & 4 & 2 & 87 & 7 &4&123&173 &3 \\
    \hline
    ex4 & 1 & 3 & 1 & 26 & 5 &3& 40&43&2 \\
    \hline
    \hline
    Unbalanced-1 & 1 & 5 & 3 & 421 &17  &12&1717 &2394 &6\\
    \hline
    Unbalanced-2 & 2 & 5 & 3 & 1095 &27 &20 &7347& 13227 &10\\
    \hline
    Unbalanced-3 & 3 & 5 & 3 & 2087 & 37& 28&21200&45400 &17\\
    \hline
\end{tabular}
\caption{Experimental results. Here $K$ denotes the maximum constant appearing in the automaton, $|Q|$ and $|\Sigma|$ denote the numbers of states (of the automaton provided as input to the algorithm) and the alphabet. Note that, for some of these automata, the number of states specified in the table is one more than the number of states depicted in the figures. This is because, for the automata that are not total, we add an additional sink state and add all missing transitions to this state.}
\label{tab:experiments}
\end{table}

\section{Discussions}

Our grey box learning algorithm can produce automata with a minimal number of states, equivalent to the minimal DERA for the target language. Currently, the transitions are labeled with region constraints. In the future, we plan to implement a procedure that consolidates edges labeled with the same event and regions, whose unions are convex, into a single edge with a zone constraint. This planned procedure aims to not only maintain an automaton with the fewest possible states, but also to minimize the number of edges and zones. Such optimization could further enhance the readability of the models we produce, thereby improving explainability—an increasingly important aspect of machine learning.

Building on our proposed future work above, we also aim to refine our approach to membership queries. Instead of defaulting to queries on regions where it may not be necessary, we plan to consider queries on zones. If the response from the {\tt Teacher} is not uniform across the zone, only then will we break down the queries into more specific sub-queries, ending in regions if necessary. This strategy will necessitate revising the current learning protocol we use, but exploring these alternatives could enhance the efficiency and effectiveness of our learning process. Similar ideas have been used in~\cite{LADSL11}.

Throughout this paper, we follow the convention used in prior research~\cite{GJL10,Waga23} that the parameter $K$ is predetermined (the maximal constant against which the clocks are compared). Similarly, in our experimental section, we assume we know which events need to be tracked by a clock. 
While these assumptions are typically reasonable in practice, it is possible to consider a more flexible approach. We could start with no predetermined events to track with clocks and only introduce clocks for events and update the maximum constant, as needed. 
These could be inferred from the counterexamples provided by the {\tt Teacher}.

Overall, we believe the greybox learning framework described in this work may also be well-suited for other classes of languages that enjoy a similar condition like Equation~\ref{eq:intersection}, where the automaton $C$ can be of relatively smaller size, while a canonical model recognizing $L$ might be too large.

\bibliographystyle{plainurl}
\bibliography{main}

\begin{thebibliography}{10}

\bibitem{DBLP:conf/nfm/0002R16}
Andreas Abel and Jan Reineke.
\newblock Gray-box learning of serial compositions of mealy machines.
\newblock In {\em {NFM}}, volume 9690 of {\em Lecture Notes in Computer
  Science}, pages 272--287. Springer, 2016.

\bibitem{AD94}
Rajeev Alur and David~L. Dill.
\newblock A theory of timed automata.
\newblock {\em Theor. Comput. Sci.}, 126(2):183--235, 1994.

\bibitem{AFH99}
Rajeev Alur, Limor Fix, and Thomas~A. Henzinger.
\newblock Event-clock automata: {A} determinizable class of timed automata.
\newblock {\em Theor. Comput. Sci.}, 211(1-2):253--273, 1999.

\bibitem{AWZZZ21}
Jie An, Lingtai Wang, Bohua Zhan, Naijun Zhan, and Miaomiao Zhang.
\newblock Learning real-time automata.
\newblock {\em Sci. China Inf. Sci.}, 64(9), 2021.

\bibitem{Angluin87}
Dana Angluin.
\newblock Learning regular sets from queries and counterexamples.
\newblock {\em Inf. Comput.}, 75(2):87--106, 1987.

\bibitem{BGPSV24}
V{\'{e}}ronique Bruy{\`{e}}re, Bharat Garhewal, Guillermo~A. P{\'{e}}rez,
  Ga{\"{e}}tan Staquet, and Frits~W. Vaandrager.
\newblock Active learning of mealy machines with timers.
\newblock {\em CoRR}, abs/2403.02019, 2024.
\newblock URL: \url{https://doi.org/10.48550/arXiv.2403.02019}.

\bibitem{CFCTW09}
Yu{-}Fang Chen, Azadeh Farzan, Edmund~M. Clarke, Yih{-}Kuen Tsay, and Bow{-}Yaw
  Wang.
\newblock Learning minimal separating dfa's for compositional verification.
\newblock In {\em {TACAS}}, volume 5505 of {\em Lecture Notes in Computer
  Science}, pages 31--45. Springer, 2009.

\bibitem{CLRT22}
L{\'{e}}na{\"{\i}}g Cornanguer, Christine Largou{\"{e}}t, Laurence Roz{\'{e}},
  and Alexandre Termier.
\newblock {TAG:} learning timed automata from logs.
\newblock In {\em {AAAI}}, pages 3949--3958. {AAAI} Press, 2022.

\bibitem{Z3}
Leonardo~Mendon{\c{c}}a de~Moura and Nikolaj~S. Bj{\o}rner.
\newblock {Z3:} an efficient {SMT} solver.
\newblock In {\em {TACAS}}, volume 4963 of {\em Lecture Notes in Computer
  Science}, pages 337--340. Springer, 2008.

\bibitem{DBLP:conf/avmfss/Dill89}
David~L. Dill.
\newblock Timing assumptions and verification of finite-state concurrent
  systems.
\newblock In {\em Automatic Verification Methods for Finite State Systems},
  volume 407 of {\em Lecture Notes in Computer Science}, pages 197--212.
  Springer, 1989.

\bibitem{DBLP:journals/iandc/Gold78}
E.~Mark Gold.
\newblock Complexity of automaton identification from given data.
\newblock {\em Inf. Control.}, 37(3):302--320, 1978.
\newblock \href {https://doi.org/10.1016/S0019-9958(78)90562-4}
  {\path{doi:10.1016/S0019-9958(78)90562-4}}.

\bibitem{GJL10}
Olga Grinchtein, Bengt Jonsson, and Martin Leucker.
\newblock Learning of event-recording automata.
\newblock {\em Theor. Comput. Sci.}, 411(47):4029--4054, 2010.

\bibitem{Tchecker}
Fr{\'{e}}d{\'{e}}ric Herbreteau and Gerald Point.
\newblock Tchecker.
\newblock URL: \url{https://github.com/ticktac-project/tchecker}.

\bibitem{IS14}
Malte Isberner and Bernhard Steffen.
\newblock An abstract framework for counterexample analysis in active automata
  learning.
\newblock In {\em {ICGI}}, volume~34 of {\em {JMLR} Workshop and Conference
  Proceedings}, pages 79--93. JMLR.org, 2014.

\bibitem{LADSL11}
Shang{-}Wei Lin, {\'{E}}tienne Andr{\'{e}}, Jin~Song Dong, Jun Sun, and Yang
  Liu.
\newblock An efficient algorithm for learning event-recording automata.
\newblock In {\em {ATVA}}, volume 6996 of {\em Lecture Notes in Computer
  Science}, pages 463--472. Springer, 2011.

\bibitem{MWSKF023}
Mark Moeller, Thomas Wiener, Alaia Solko{-}Breslin, Caleb Koch, Nate Foster,
  and Alexandra Silva.
\newblock Automata learning with an incomplete teacher.
\newblock In {\em {ECOOP}}, volume 263 of {\em LIPIcs}, pages 21:1--21:30.
  Schloss Dagstuhl - Leibniz-Zentrum f{\"{u}}r Informatik, 2023.

\bibitem{aalpy}
Edi Muškardin, Bernhard Aichernig, Ingo Pill, Andrea Pferscher, and Martin
  Tappler.
\newblock Aalpy: an active automata learning library.
\newblock {\em Innovations in Systems and Software Engineering}, 18:1--10, 03
  2022.
\newblock \href {https://doi.org/10.1007/s11334-022-00449-3}
  {\path{doi:10.1007/s11334-022-00449-3}}.

\bibitem{RS93}
Ronald~L. Rivest and Robert~E. Schapire.
\newblock Inference of finite automata using homing sequences.
\newblock {\em Inf. Comput.}, 103(2):299--347, 1993.

\bibitem{TALL19}
Martin Tappler, Bernhard~K. Aichernig, Kim~Guldstrand Larsen, and Florian
  Lorber.
\newblock Time to learn - learning timed automata from tests.
\newblock In {\em {FORMATS}}, volume 11750 of {\em Lecture Notes in Computer
  Science}, pages 216--235. Springer, 2019.

\bibitem{DTA-hscc}
Yu~Teng, Miaomiao Zhang, and Jie An.
\newblock Learning deterministic multi-clock timed automata.
\newblock In {\em {HSCC}}, pages 6:1--6:11. {ACM}, 2024.

\bibitem{ACM-Vaandrager}
Frits~W. Vaandrager.
\newblock Model learning.
\newblock {\em Commun. {ACM}}, 60(2):86--95, 2017.

\bibitem{VB021}
Frits~W. Vaandrager, Roderick Bloem, and Masoud Ebrahimi.
\newblock Learning mealy machines with one timer.
\newblock In {\em {LATA}}, volume 12638 of {\em Lecture Notes in Computer
  Science}, pages 157--170. Springer, 2021.

\bibitem{VWW09}
Sicco Verwer, Mathijs de~Weerdt, and Cees Witteveen.
\newblock One-clock deterministic timed automata are efficiently identifiable
  in the limit.
\newblock In {\em {LATA}}, volume 5457 of {\em Lecture Notes in Computer
  Science}, pages 740--751. Springer, 2009.

\bibitem{Waga23}
Masaki Waga.
\newblock Active learning of deterministic timed automata with myhill-nerode
  style characterization.
\newblock In {\em {CAV} {(1)}}, volume 13964 of {\em Lecture Notes in Computer
  Science}, pages 3--26. Springer, 2023.

\bibitem{octa}
Runqing Xu, Jie An, and Bohua Zhan.
\newblock Active learning of one-clock timed automata using constraint solving.
\newblock In {\em {ATVA}}, volume 13505 of {\em Lecture Notes in Computer
  Science}, pages 249--265. Springer, 2022.

\end{thebibliography}

\end{document}